\newtheorem{proposition}{\bf Proposition}
\newtheorem{theorem}{\bf Theorem}
\newtheorem{lemma}{\bf Lemma}
\newtheorem{corollary}{\bf Corollary}
\def\R{\mathbb{R}}
\newcommand{\angmom}{{\bm c}}
\newcommand{\energy}{{\cal E}}
\newcommand{\lenz}{{\bm L}}
\newcommand{\erre}{{\bm r}}
\newcommand{\erredot}{\dot{\bm r}}
\newcommand{\erho}{\hat{\bm e}^\rho}
\newcommand{\ealpha}{\hat{\bm e}^\alpha}
\newcommand{\edelta}{\hat{\bm e}^\delta}
\def\eperp{\hat{{\bm e}}^\perp}
\def\alphadot{\dot{\alpha}}
\def\deltadot{\dot{\delta}}
\def\rhodot{\dot{\rho}}
\def\bqdot{\dot{\bm{q}}}
\def\bq{\bm{q}}
\def\DD{{\bm D}}
\def\EE{{\bm E}}
\def\FF{{\bm F}}
\def\GG{{\bm G}}
\def\JJ{{\bm J}}
\def\NN{{\bm N}}
\def\OO{{\bm O}}
\def\PP{{\bm P}}
\def\Att{{\cal A}}
\def\Avec{{\cal D}}
\def\Svec{{\cal S}}
\def\Svectil{\widetilde{\cal S}}
\def\Vvectil{\widetilde{\cal V}}
\def\bPhi{\bm{\Phi}}
\def\bPsi{\bm{\Psi}}
\def\bzero{\bm{0}}
\def\D1xD2{{\bm W}_{12}}
\def\qtilde{\tilde{\mathfrak{q}}}
\title{Orbit determination from one position vector and a very short arc of optical observations}
\author[1]{Erica Scantamburlo\thanks{erica.scantamburlo@polito.it}}
\author[2]{Giovanni F. Gronchi\thanks{giovanni.federico.gronchi@unipi.it}}
\author[2]{Giulio Ba\`u\thanks{giulio.bau@unipi.it}}
\affil[1]{Department of Mechanical and Aerospace Engineering, Politecnico di Torino, Corso Duca degli Abruzzi 24, 10129, Italy}
\affil[2]{Dipartimento di Matematica, University of Pisa, Largo B. Pontecorvo 5, 56127, Italy}
\begin{document}

\maketitle

\begin{abstract}
  In this paper we address the problem of computing a preliminary
  orbit of a celestial body from one topocentric position vector and a
  very short arc (VSA) of optical observations. Using the conservation
  laws of the two-body dynamics, we write the problem as a system of 8
  polynomial equations in 6 unknowns. We prove that this system is
  generically consistent, namely it admits solutions at least in the
  complex field. From this system we derive a univariate polynomial
  $\mathfrak{v}$ of degree 8 in the unknown topocentric distance at
  the mean epoch of the VSA. Through Gr\"obner bases theory, we show
  that the degree of $\mathfrak{v}$ is minimum among the degrees of
  all the univariate polynomials solving this problem. The proposed
  method is relevant for different purposes, e.g. the computation of a
  preliminary orbit of an Earth satellite with radar and optical
  observations, the detection of maneuvres of an Earth satellite, and
  the recovery of asteroids which are lost due to a planetary close
  encounter. We also show some numerical tests in the case of
  asteroids undergoing a close encounter with the Earth.
\end{abstract}
\textbf{Keywords:} Orbit determination -- Keplerian integrals -- Algebraic methods

%---------------------
%---------------------
\section{Introduction}

The problem of computing the orbit of celestial bodies has attracted
the interest of scientists since a long time, see \citet{laplace1780},
\citet{lagrange}, \citet{gauss1809}. The recent improvements in the
observation technology have posed new interesting mathematical
problems in this field.  The number of asteroids observations
performed by modern telescopes is very large. Usually they can be
grouped into very short arcs (VSAs) of optical observations, however
it is not easy to determine whether VSAs collected in different nights
belong to the same observed objects.  In general, from a VSA we can
not compute a reliable preliminary orbit, but we can try to put
together different VSAs to perform this task, see
e.g. \citet{milanietal2005}. Assuming that two VSAs belong to the same
asteroid, we can write polynomial equations to compute a preliminary
orbit using the conservation laws of the two-body dynamics.
%This way of solving the orbit
%determination problem is called the {\em Keplerian integrals} method:
Recently, different ways to combine these integrals of motion have been
developed and tested, see \citet{th77,gdm10,gfd11,gbm15,gbm17}.

In this paper we address the problem of computing the orbit of a
celestial body (hereafter, the OD problem) from one topocentric
position $\mathcal{P}_1 = ( \alpha_1 ,\delta _1, \rho_1)$ at epoch
$t_1$ and a VSA of optical observations, from which we can derive an
attributable $\mathcal{A}_2 =
(\alpha_2,\delta_2,\dot{\alpha}_2,\dot{\delta}_2)$ at the mean epoch
$\overline{t}_2$ of the arc.
Here, $\alpha$, $\delta$, $\rho$ denote respectively right ascension,
declination and topocentric distance of the body, while $\alphadot,
\deltadot$ stand for the angular rates.

Using the conservation of angular momentum, energy and Laplace-Lenz
vector we write the OD problem as a system of polynomial equations in
the unknowns $\dot{\rho}_1$, $\dot{\alpha}_1$, $\dot{\delta}_1$,
$\rho_2$, $\dot{\rho}_2$, $z_2$, where $z_2$ is an auxiliary variable,
and the other variables allow us to obtain an orbit in spherical
coordinates at the two epochs.

We prove that this polynomial system is consistent, namely it
generically admits solutions, at least in the complex field, and we
obtain a univariate polynomial $\mathfrak{v}$ of degree 8 in the
unknown range $\rho_2$ to solve the OD problem. Through a computer
algebra software we are also able to show that the degree of
$\mathfrak{v}$ is minimum among the degrees of all the univariate
polynomials in $\rho_2$ solving this problem.

The proposed method is relevant for different purposes, such as the
computation of a preliminary orbit of an Earth satellite with radar
and optical observations, the detection of maneuvres of an Earth
satellite,
%\footnote{\textcolor{red}{Alcuni esempi di lavoro possono essere \cite{goff2015,escribano2022}}},
the recovery of asteroids which are lost due to a planetary encounter,
if the latter can be modeled as an instantaneous change of direction
of the velocity vector like in \"Opik theory \citep{opik1976}. Here we
show the results of the application of our algorithm to the orbits of
some near-Earth asteroids, whose positions have been changed to
increase the effect of the close encounter with the Earth.

The paper is organized as follows: after recalling the expressions of
the Keplerian integrals in Section~\ref{s:kepint}, we introduce the OD
problem in Section~\ref{s:problem} as an overdetermined polynomial
system, whose consistency is shown in Section~\ref{s:consistent},
where the univariate polynomial $\mathfrak{v}$ is derived. The
minimality of the degree of $\mathfrak{v}$ is proved in
Section~\ref{s:groebner}. In Section~\ref{s:orbselect} we discuss the
selection of the solutions. Finally, in Section~\ref{s:numerics} we
present the numerical tests.

%----------------------------
%----------------------------
\section{Keplerian integrals}
\label{s:kepint}

Let us consider a celestial body whose dynamics can be modeled by
Kepler's problem
\begin{equation}
  \ddot{\erre} = -\mu\displaystyle\frac{\erre}{|\erre|^3},
  \label{kepeq}
\end{equation}
where $\erre$ and $\mu$ denote respectively its position and the
gravitational parameter.
%In this context are included e.g. the cases of a satellite orbiting
%the Earth, or an asteroid orbiting the Sun. Hereafter, we shall speak
%of the case of an Earth satellite.
Equation~\eqref{kepeq} admits the first integrals
\begin{equation}
  \angmom = \erre \times \dot{\erre},\hspace{0.6cm}
  \energy = \frac{1}{2}|\erredot|^2 - \frac{\mu}{|\erre|},\hspace{0.6cm}
  \lenz = \frac{1}{\mu}\erredot\times\angmom -\frac{\erre}{|\erre|},
  \label{integrals}
\end{equation}
corresponding to the angular momentum, the energy, and the
Laplace-Lenz vector of the body, respectively. We call {\em Keplerian
  integrals} these constants of motion.

\noindent Note that we can write
\[
\mu\lenz = \Bigl(|\erredot|^2 - \frac{\mu}{|\erre|}\Bigr)\erre -
(\erredot\cdot\erre)\erredot.
\]

%-----------------------
%-----------------------
\section{The OD problem}
\label{s:problem}

Let us introduce a reference frame whose origin is the center of force
of the Keplerian dynamics. We consider the Keplerian motion of a
celestial body around the origin observed from a moving point of view.
Assume the position $\bq$ and the velocity $\bqdot$ of the observer
are known functions of time. The position of the observed body is
given by
\begin{equation}
  \erre = \bq+\rho\erho,
  \label{position}
\end{equation}
where $\rho$ represents the topocentric distance, and
\[
\erho = (\cos\delta\cos\alpha, \cos\delta\sin\alpha, \sin\delta)
\]
is the {\em line of sight}.
%with $\alpha$ and $\delta$ the right
%ascension and declination of the body.

Let us assume that we know the position vector
\[
{\cal P}_1 = (\alpha_1,\delta_1,\rho_1) \in [-\pi,\pi) \times (-\pi/2,\pi/2) \times \mathbb{R}^{+} 
\]
at epoch $t_1$, and a very short arc of optical observations of the
same object, from which we compute the attributable
\[
{\cal A}_2 = (\alpha_2,\delta_2,\alphadot_2,\deltadot_2) \in [-\pi, \pi) \times (-\pi/2,\pi/2)
\times \mathbb{R}^{2}
\]
at the mean epoch $\bar{t}_2$. We can combine these data to compute
the quantities $\rhodot_1$, $\xi_1$, $\zeta_1$, $\rho_2$, $\rhodot_2$,
%\in \mathbb{R}^{3} \times \mathbb{R}^{+} \times \mathbb{R} $,
where
\[
\xi_1 = \rho_1\alphadot_1\cos{\delta_1},\qquad
\zeta_1 = \rho_1\deltadot_1,
\]
which are missing to have a complete set of orbital elements at both
epochs. For this purpose, we use the conservation of the integrals
listed in~\eqref{integrals}.

\noindent Hereafter, we shall use subscripts $1, 2$ for all the
quantities relative to epochs $t_1$, $\bar{t}_2$.

\noindent The dependence of position and velocity on the unknowns is
given by
%the following relations:
\begin{equation}
  \begin{split}
    \erredot_1 &= \bqdot_1 + \rhodot_1\erho_1 + \xi_1\ealpha_1
    +\zeta_1 \edelta_1,\cr
    \erre_2 &= \bq_2 + \rho_2\erho_2,\cr
    \erredot_2 &= \bqdot_2 + \rhodot_2\erho_2 + \rho_2\eperp_2,
  \end{split}
  \label{velocity}
\end{equation}
where
\[
\eperp_2 = \alphadot_2\cos\delta _2\ealpha_2 + \deltadot_2\edelta _2
\]
is a known vector. Note that $\erre_1 = \bq_1+\rho_1\erho_1$ is
known.

To write a polynomial system, we replace the term
$\frac{\mu}{|\erre_2|}$ with an auxiliary variable $z_2$, independent
from $\rho_2$, and set
\[
\begin{aligned}
  \mu\widetilde{\lenz}_2 &= \Bigl(|\erredot_2|^2 - z_2\Bigr)\erre_2 -
  (\erredot_2\cdot\erre_2)\erredot_2,\\[0.5ex]
  \widetilde{\energy}_2 &= \frac{1}{2}|\erredot_2|^2 - z_2.
\end{aligned}
\]
\begin{lemma}
  The following relations hold:
  \begin{equation}
    \mu^2|\lenz_1|^2 = 2\energy_1|\angmom_1|^2 + \mu^2,\qquad
    \mu^2|\widetilde{\lenz}_2|^2 = 2\widetilde{\energy}_2|\angmom_2|^2 + z_2^2|\erre_2|^2.
    \label{cEL_rels}
  \end{equation}
  \label{l:cEL}
\end{lemma}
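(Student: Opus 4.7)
The two identities are best proved in a unified way, since the first is essentially the classical Laplace--Lenz relation of Kepler dynamics and the second is its formal analogue with $\mu/|\erre_2|$ replaced by the auxiliary variable $z_2$. I would introduce a single scalar $X$ standing either for $\mu/|\erre|$ or for $z_2$, and consider the generic vector
\[
\FF = \bigl(|\erredot|^2 - X\bigr)\erre - (\erredot\cdot\erre)\,\erredot
\]
together with the scalar $\mathcal{E} = \tfrac{1}{2}|\erredot|^2 - X$. Then the first identity corresponds to the choice $(\erre,\erredot,X,\mathcal{E},\FF) = (\erre_1,\erredot_1,\mu/|\erre_1|,\energy_1,\mu\lenz_1)$ and the second to $(\erre_2,\erredot_2,z_2,\widetilde{\energy}_2,\mu\widetilde{\lenz}_2)$.

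The computation itself is a direct expansion of $|\FF|^2$. First I would write
\[
|\FF|^2 = (|\erredot|^2 - X)^2|\erre|^2 \;-\; 2(|\erredot|^2 - X)(\erredot\cdot\erre)^2 \;+\; (\erredot\cdot\erre)^2|\erredot|^2,
\]
and observe that the combined coefficient of $(\erredot\cdot\erre)^2$ simplifies to $-2\mathcal{E}$, since $-2(|\erredot|^2-X) + |\erredot|^2 = -(|\erredot|^2 - 2X) = -2\mathcal{E}$. Then I would eliminate $(\erredot\cdot\erre)^2$ by means of Lagrange's identity $|\angmom|^2 = |\erre\times\erredot|^2 = |\erre|^2|\erredot|^2 - (\erre\cdot\erredot)^2$, so that
\[
|\FF|^2 = |\erre|^2\bigl[(|\erredot|^2 - X)^2 - 2\mathcal{E}|\erredot|^2\bigr] \;+\; 2\mathcal{E}\,|\angmom|^2.
\]
A short algebraic check, using only $\mathcal{E} = \tfrac{1}{2}|\erredot|^2 - X$, shows that the bracketed factor reduces to $X^2$, giving the master formula $|\FF|^2 = X^2|\erre|^2 + 2\mathcal{E}|\angmom|^2$.

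Finally I would specialise to the two cases of interest. For $X = z_2$, this is exactly the second identity in \eqref{cEL_rels}; for $X = \mu/|\erre_1|$, the factor $X^2|\erre_1|^2$ collapses to $\mu^2$, and the first identity follows as well. There is no real obstacle here: the argument is a purely algebraic manipulation, and the only care needed is to set up the common notation so that the same calculation serves both claims instead of being repeated twice.
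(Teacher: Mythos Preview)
Your argument is correct. It differs from the paper's proof in the choice of representation for the Laplace--Lenz vector: the paper works with the cross-product form $\mu\lenz = \erredot\times\angmom - (\mu/|\erre|)\erre$ and exploits the orthogonality of $\erredot$ and $\angmom$ via $|\erredot\times\angmom|^2 = |\erredot|^2|\angmom|^2$ and $(\erredot\times\angmom)\cdot\erre = |\angmom|^2$, whereas you start from the expanded form $(|\erredot|^2 - X)\erre - (\erredot\cdot\erre)\erredot$ and appeal to Lagrange's identity to swap $(\erre\cdot\erredot)^2$ for $|\angmom|^2$. Both routes are short and purely algebraic; your unified parametrisation by $X$ is a tidy way to avoid writing the computation twice, while the paper's version makes the geometric origin (angular momentum orthogonal to velocity) slightly more visible.
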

\begin{proof}
  From the definitions of $\lenz_1$ and $\widetilde{\lenz}_2$
  %(see equation \eqref{integrals}$_3$)
  we have
  \begin{align*}
    \mu^2|\lenz_1|^2 &= |\erredot_1\times\angmom_1|^2 - 2\frac{\mu}{|\erre_1|}(\erredot_1\times\angmom_1\cdot\erre_1)
    + \mu^2,\\[0.5ex]
    \mu^2|\widetilde{\lenz}_2|^2 &= |\erredot_2\times\angmom_2|^2 - 2z_2(\erredot_2\times\angmom_2\cdot\erre_2)
    + z_2^2|\erre_2|^2.
  \end{align*}
  We note that, at both epochs,
  \[
  |\erredot\times\angmom|^2 = |\erredot|^2|\angmom|^2,\qquad
  \erredot\times\angmom\cdot\erre = |\angmom|^2,
  \]
  so that
  \begin{align*}
    \mu^2|\lenz_1|^2 &= |\angmom_1|^2\Bigl(|\erredot_1|^2 - 2\frac{\mu}{|\erre_1|}\Bigr) + \mu^2,\\[0.5ex]
    \mu^2|\widetilde{\lenz}_2|^2 &= |\angmom_2|^2\bigl(|\erredot_2|^2 - 2z_2\bigr) + z_2^2|\erre_2|^2.
  \end{align*}
  Relations~\eqref{cEL_rels} immediately follow from the definitions
  of $\energy_1$ and $\widetilde{\energy}_2$.
\end{proof}

We consider the overdetermined polynomial system
\begin{equation}
  \angmom_1 = \angmom_2,\quad
  \lenz_1 = \widetilde{\lenz}_2,\quad
  \energy_1 = \widetilde{\energy}_2,\quad
  z_2^2|\erre_2|^2 = \mu^2,
  \label{fullsys}
\end{equation}
consisting of 8 equations in the 6 unknowns $\rhodot_1$, $\xi_1$,
$\zeta_1$, $\rho_2$, $\rhodot_2$, $z_2$. System~\eqref{fullsys}
corresponds to the equations of our OD problem.

%-------------------------------------
%-------------------------------------
\section{Consistency of the equations}
\label{s:consistent}

We will show the following property:

\begin{theorem}
  The overdetermined polynomial system~\eqref{fullsys} is generically
  consistent, i.e. it always has solutions in the complex field for a
  generic choice of the data ${\cal P}_1$, ${\cal A}_2$, $\bq_1$,
  $\bqdot_1$, $\bq_2$, $\bqdot_2$.
  \label{teo:consist}
\end{theorem}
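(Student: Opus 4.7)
The plan is to use two algebraic dependencies to cut the apparent 8 equations down to 6 effective equations in 6 unknowns, then to produce by elimination the univariate polynomial $\mathfrak{v}$ of degree 8 in $\rho_2$ announced in the introduction and argue that it is not identically zero as a polynomial in the data, so that for generic data it has complex roots that back-substitute to solutions of~\eqref{fullsys}.

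First I would identify the two dependencies among the 8 equations. Both Laplace--Lenz vectors satisfy $\lenz_1\cdot\angmom_1 = 0$ and $\widetilde{\lenz}_2\cdot\angmom_2 = 0$ as identities in the unknowns (a direct consequence of the cross-product structure of $\angmom$ and of the expressions defining $\lenz_1$ and $\widetilde{\lenz}_2$). Hence, once $\angmom_1=\angmom_2$ is enforced, both sides of $\lenz_1=\widetilde{\lenz}_2$ are orthogonal to the common vector $\angmom_1$, so the component of the Laplace--Lenz equation along $\angmom_1$ is automatically satisfied: only 2 of its 3 scalar components are independent (assuming $\angmom_1\neq\bzero$, a generic condition). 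Second, Lemma~\ref{l:cEL} combined with $z_2^2|\erre_2|^2=\mu^2$ gives
\[
2\energy_1|\angmom_1|^2 + \mu^2 \;=\; \mu^2|\lenz_1|^2 \;=\; \mu^2|\widetilde{\lenz}_2|^2 \;=\; 2\widetilde{\energy}_2|\angmom_1|^2 + \mu^2,
\]
so, provided $|\angmom_1|\neq 0$, the energy equation $\energy_1 = \widetilde{\energy}_2$ is a consequence of the angular momentum, Laplace--Lenz and $z_2^2|\erre_2|^2=\mu^2$ equations. This reduces the effective count to $3+2+1=6$ polynomial equations in 6 unknowns.

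Next I would carry out the elimination. The three scalar equations $\angmom_1=\angmom_2$ are linear in the unknowns $\rhodot_1,\xi_1,\zeta_1$; the associated $3\times 3$ coefficient matrix is generically non-singular, so these three unknowns can be expressed as rational functions of $\rho_2$ and $\rhodot_2$. Substituting into the two independent Laplace--Lenz equations and into $z_2^2|\erre_2|^2=\mu^2$ produces three polynomial equations in $\rho_2,\rhodot_2,z_2$. Computing resultants, first with respect to $z_2$ and then with respect to $\rhodot_2$, yields a univariate polynomial $\mathfrak{v}(\rho_2)$, whose degree is to be verified to equal 8.

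The main obstacle is to show that $\mathfrak{v}$ is not the zero polynomial in the data, and more generally that none of the denominators introduced during the elimination vanish identically. I would address this by analysing the coefficients of $\mathfrak{v}$ and of those denominators as polynomials in $(\mathcal{P}_1,\mathcal{A}_2,\bq_1,\bqdot_1,\bq_2,\bqdot_2)$, and by exhibiting one explicit instance — for example data obtained by sampling a genuine Keplerian orbit with generic elements — at which the leading coefficient of $\mathfrak{v}$ is non-zero and all non-degeneracy conditions hold. Since each such condition is an algebraic inequality on the data space, it defines a Zariski-open, hence dense, set; outside a proper Zariski-closed subset, $\mathfrak{v}$ has 8 roots in $\mathbb{C}$ and each of them back-substitutes through the previous rational relations to a solution of the full system~\eqref{fullsys}.
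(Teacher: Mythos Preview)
Your identification of the two redundancies is sound and in fact a bit cleaner than the paper's: the orthogonality $\lenz\cdot\angmom=0$ shows at once that, given $\angmom_1=\angmom_2$, only two scalar Laplace--Lenz equations survive, whereas the paper extracts this as Proposition~\ref{prop:redundant} after fixing a particular basis. You also make a different but equivalent choice of which of the two equations related by Lemma~\ref{l:cEL} to drop: you keep $z_2^2|\erre_2|^2=\mu^2$ and discard the energy equation, while the paper keeps $\energy_1=\widetilde{\energy}_2$ and discards $z_2^2|\erre_2|^2=\mu^2$.

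However, your elimination step contains a genuine error. You assert that the $3\times 3$ coefficient matrix of $(\rhodot_1,\xi_1,\zeta_1)$ in the angular momentum equations is generically non-singular. It is not: its columns are $\DD_1=\bq_1\times\erho_1$, $\NN_1=\erre_1\times\ealpha_1$, $\OO_1=\erre_1\times\edelta_1$, and all three are orthogonal to $\erre_1$ (for $\DD_1$ note $(\bq_1\times\erho_1)\cdot(\bq_1+\rho_1\erho_1)=0$). They therefore lie in the plane $\erre_1^{\perp}$ and are linearly dependent \emph{identically}, not just on a Zariski-closed set. Geometrically, $\angmom_1=\erre_1\times\erredot_1$ is orthogonal to the known vector $\erre_1$ whatever $(\rhodot_1,\xi_1,\zeta_1)$ may be, so $\angmom_1=\angmom_2$ cannot be solved for these three variables alone; it first imposes the scalar constraint $\angmom_2\cdot\erre_1=0$ on $(\rho_2,\rhodot_2)$.

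The paper circumvents this by viewing the angular momentum equations as linear in the \emph{four} variables $\rhodot_1,\xi_1,\zeta_1,\rhodot_2$ and eliminating $\rhodot_1,\rhodot_2,\xi_1$ in favour of $(\zeta_1,\rho_2)$, via projections onto $\D1xD2$, $\DD_1\times\D1xD2$, $\DD_2\times\D1xD2$. After also eliminating $z_2$ through the energy equation, the two surviving Laplace--Lenz relations are respectively linear and quadratic in $\zeta_1$, and their resultant in $\zeta_1$ is the degree-8 polynomial $\mathfrak{v}(\rho_2)$. Even if you patch the first step, your proposed order of elimination (resultants in $z_2$, then in $\rhodot_2$) would not obviously produce degree~8 without a structural analysis comparable to the paper's.
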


\noindent The proof makes use of the results presented in the
following subsections.

We start by noting that, because of relations~\eqref{cEL_rels},
equation
\[
z_2^2|\erre_2|^2 = \mu^2
\]
is a consequence of the reduced system
\begin{equation}
  \angmom_1 = \angmom_2,\quad
  \lenz_1 = \widetilde{\lenz}_2,\quad
  \energy_1 = \widetilde{\energy}_2,
  \label{redsys}
\end{equation}
of 7 equations in 6 unknowns. Therefore, to prove
Theorem~\ref{teo:consist} we show that system~\eqref{redsys} is
generically consistent.

Moreover, for a generical choice of the data ${\cal P}_1$, ${\cal
  A}_2$, $\bq_1$, $\bqdot_1$, $\bq_2$, $\bqdot_2$,
system~\eqref{redsys} is equivalent to
\[
\mathfrak{q}_1 = \mathfrak{q}_2 = \mathfrak{q}_3 = \mathfrak{q}_4 =
\mathfrak{q}_5 = \mathfrak{q}_6 = \mathfrak{q}_7= 0,
\]
where
\begin{equation}
  \begin{split}
    \mathfrak{q}_1 &= (\bm{c}_1 - \bm{c}_2)\cdot\D1xD2,\cr
    \mathfrak{q}_2 &= (\bm{c}_1 - \bm{c}_2)\cdot\bm{D}_1\times\D1xD2,\cr
    \mathfrak{q}_3 &= (\bm{c}_1 - \bm{c}_2)\cdot\bm{D}_2\times\D1xD2,\cr
    \mathfrak{q}_4 &= \mu(\bm{L}_1 - \tilde{\bm{L}}_2)\cdot\bm{D}_1,\cr
    \mathfrak{q}_5 &= \mu(\bm{L}_1 - \tilde{\bm{L}}_2)\cdot\bm{D}_2,\cr
    \mathfrak{q}_6 &= \mu(\bm{L}_1 - \tilde{\bm{L}}_2)\cdot(\bm{r}_1\times\hat{\bm{e}}^{\rho}_2),\cr
    \mathfrak{q}_7 &= \energy_1 - \tilde{\energy}_2,
  \end{split}
  \label{qgenerators}
\end{equation}
with
\begin{equation}
  \DD_j = \bq_j\times\erho_j,\quad j=1,2,
  \label{DDj}
\end{equation}
and
\begin{equation}
  \D1xD2 =  \DD_1\times\DD_2.
  \label{D1xD2}
\end{equation}

%--------------------------------
\subsection{The angular momentum}

The angular momenta $\angmom_1$ and $\angmom_2$, written in terms of
the unknowns,
%$\rhodot_1,\xi_1,\zeta_1$ and $\rhodot_2,\rho_2$
become
\begin{align*}
  \angmom_1 &= \DD_1\rhodot_1 + \NN_1\xi_1 + \OO_1\zeta_1 + \PP_1,\\[0.5ex]
  \angmom_2 &= \DD_2\rhodot_2 +\EE_2\rho_2 ^2 +\FF_2\rho_2 + \GG_2,
\end{align*}
where $\DD_1$, $\DD_2$ are defined as in~\eqref{DDj}, and
\begin{align*}
  %\DD_1 &= \bq_1 \times \erho_1,   &\DD_2 &= \bq_2 \times \erho_2,\\
  \NN_1 &= \erre_1\times\ealpha_1, & \EE_2 &= \alphadot_2\cos\delta_2\edelta_2 - \deltadot_2\ealpha_2,\\
  \OO_1 &= \erre_1\times\edelta_1, & \FF_2 &= \alphadot_2\cos\delta_2(\bq_2\times\ealpha_2) +
  \deltadot_2(\bq_2\times\edelta_2) + (\erho_2\times\bqdot_2),\\
  \PP_1 &= \erre_1\times\bqdot_1, & \GG_2 &= \bq_2\times\bqdot_2.
\end{align*}
Therefore, we can write 
\begin{equation}
  \angmom_1 - \angmom_2 = \DD_1\rhodot_1 - \DD_2\rhodot_2 + \JJ(\xi_1,\zeta_1,\rho_2),
  \label{angmomdiff}
\end{equation}
with
\begin{equation}
  \JJ(\xi_1,\zeta_1,\rho_2) = \NN_1\xi_1 + \OO_1\zeta_1 - \EE_2\rho_2^2 - \FF_2\rho_2 + \PP_1 - \GG_2.
  \label{JJ}
\end{equation}

%------------------------------------
\subsection{Elimination of variables}

Using relations~\eqref{angmomdiff},~\eqref{JJ}, the polynomials
$\mathfrak{q}_1$, $\mathfrak{q}_2$, $\mathfrak{q}_3$ defined in
(\ref{qgenerators}) can be written as
\begin{equation*}
  \begin{aligned}
    \mathfrak{q}_1 &= Q^{(1)}_{100}\xi_1 + Q^{(1)}_{010}\zeta_1 + Q^{(1)}_{002}\rho_2^2 + Q^{(1)}_{001}\rho_2 + Q^{(1)}_{000},\\[0.5ex]
    \mathfrak{q}_2 &= -|\D1xD2|^2\rhodot_2 + Q^{(2)}_{100}\xi_1 + Q^{(2)}_{010}\zeta_1 + Q^{(2)}_{002}\rho_2^2 + Q^{(2)}_{001}\rho_2
    + Q^{(2)}_{000},\\[0.5ex]
    \mathfrak{q}_3 &= |\D1xD2|^2\rhodot_1 + Q^{(3)}_{100}\xi_1 + Q^{(3)}_{010}\zeta_1 + Q^{(3)}_{002}\rho_2^2 + Q^{(3)}_{001}\rho_2
    + Q^{(3)}_{000},
    % \mathfrak{q}_5 &= f_5(\rhodot_1,\rhodot_2,\xi_1,\zeta_1,\rho_2),\cr
    % \mathfrak{q}_6 &= f_6(\rhodot_1,\rhodot_2,z_2,\xi_1,\zeta_1,\rho_2),\cr
    % \mathfrak{q}_7 &= z_2 + f_7(\rhodot_1,\rhodot_2,\xi_1,\zeta_1,\rho_2)
  \end{aligned}
  %\label{qgen2}
\end{equation*}
where
\begin{equation*}
  \begin{aligned}
    Q^{(1)}_{100} &= \NN_1\cdot\D1xD2, & Q^{(1)}_{010} &= \OO_1\cdot\D1xD2,\\[0.5ex]
    Q^{(1)}_{002} &= -\EE_2\cdot\D1xD2, & Q^{(1)}_{001} &= -\FF_2\cdot\D1xD2,\\[0.5ex]
    Q^{(1)}_{000} &= (\GG_2-\PP_1)\cdot\D1xD2, & &
  \end{aligned}
\end{equation*}
\begin{equation*}
  \begin{aligned}
    Q^{(2)}_{100} &= \NN_1\cdot\DD_1\times\D1xD2, & Q^{(2)}_{010} &= \OO_1\cdot\DD_1\times\D1xD2,\\[0.5ex]
    Q^{(2)}_{002} &= -\EE_2\cdot\DD_1\times\D1xD2, & Q^{(2)}_{001} &= -\FF_2\cdot\DD_1\times\D1xD2,\\[0.5ex]
    Q^{(2)}_{000} &= (\PP_1-\GG_2)\cdot\DD_1\times\D1xD2, & &
  \end{aligned}
\end{equation*}
\begin{equation*}
  \begin{aligned}
    Q^{(3)}_{100} &= \NN_1\cdot\DD_2\times\D1xD2, & Q^{(3)}_{010} &= \OO_1\cdot\DD_2\times\D1xD2,\\[0.5ex]
    Q^{(3)}_{002} &= -\EE_2\cdot\DD_2\times\D1xD2, & Q^{(3)} _{001} &= -\FF_2\cdot\DD_2\times\D1xD2,\\[0.5ex]
    Q^{(3)}_{000} &= (\PP_1-\GG_2)\cdot\DD_2\times\D1xD2. & &  
  \end{aligned}
\end{equation*}
%and $f_5,f_6,f_7$ are polynomials.
In particular, the occurence of the variables $\rhodot_1$, $\xi_1$,
$\zeta_1$, $\rhodot_2$ in $\mathfrak{q}_1$,
$\mathfrak{q}_2$,$\mathfrak{q}_3$ is at most linear. From the
definitions above, we obtain
\begin{align*}
  Q^{(1)}_{100} &= (\NN_1\times\DD_1)\cdot\DD_2 = \left[(\erre_1\times\ealpha_1)\times(\erre_1\times\erho_1)\right]\cdot\DD_2\\
  &= \left[(\erre_1\times\ealpha_1)\cdot\erho_1\right]\erre_1\cdot\DD_2 = -(\erre_1\cdot\edelta_1)(\erre_1\cdot\DD_2),\\[0.5ex]
  Q^{(1)}_{010} &= (\OO_1\times\DD_1)\cdot\DD_2 = \left[(\erre_1\times\edelta_1)\times(\erre_1\times\erho_1)\right]\cdot\DD_2\\
  &= \left[(\erre_1\times\edelta_1)\cdot\erho_1\right]\erre_1\cdot\DD_2 = (\erre_1\cdot\ealpha_1)(\erre_1\cdot\DD_2).
\end{align*}
Therefore, the coefficients $Q_{100}^{(1)}$ and $Q_{010}^{(1)}$ are
both vanishing iff
\[
\erre_1\cdot\DD_2 = 0
\qquad\mbox{or}\qquad
\DD_1 = \bzero.
\]
In fact, the second condition in the alternative above is equivalent
to
\[
\erre_1\cdot\ealpha_1 = \erre_1\cdot\edelta_1 = 0.
\]
Thus, for a generic choice of the data, the angular momentum equations
allow us to eliminate $\rhodot_1$, $\rhodot_2$ and one variable
between $\xi_1$ and $\zeta_1$.
%by elementary linear algebra methods.
Assuming $Q^{(1)}_{100}\neq 0$, we choose to eliminate $\xi_1$, which
can be written as
\begin{equation}
  \xi_1 = - \frac{Q^{(1)}_{010}\zeta_1 + \sum_{h=0}^2 Q^{(1)}_{00h}\rho_2^h}{Q^{(1)}_{100}}.
  \label{xi1_explicit}
\end{equation}
Substituting~\eqref{xi1_explicit} in equations
$\mathfrak{q}_3=\mathfrak{q}_2 = 0$ and assuming $\D1xD2\neq \bzero$
we find
\begin{align}
  \dot{\rho}_1 &= - \frac{1}{|\D1xD2|^2Q^{(1)}_{100}}\Bigl[\left(Q^{(3)}_{010}Q^{(1)}_{100} - Q^{(1)}_{010}Q^{(3)}_{100}\right)\zeta_1
    + \sum_{h=0}^2\left(Q^{(3)}_{00h}Q^{(1)}_{100} - Q^{(1)}_{00h}Q^{(3)}_{100}\right)\rho_2^h\Bigr]
  \label{rhodot1}\\[0.5ex]
  \dot{\rho}_2 &= \frac{1}{|\D1xD2|^2 Q^{(1)}_{100}}\Bigl[\left(Q^{(2)}_{010}Q^{(1)}_{100} - Q^{(1)}_{010}Q^{(2)}_{100}\right)\zeta_1
    + \sum_{h=0}^2\left(Q^{(2)}_{00h}Q^{(1)}_{100} - Q^{(1)}_{00h}Q^{(2)}_{100}\right)\rho_2^h\Bigr].
  \label{rhodot2}
\end{align}
Hence, by using relations~\eqref{xi1_explicit},~\eqref{rhodot1},
and~\eqref{rhodot2} we can eliminate the variables $\xi_1$,
$\rhodot_1$, $\rhodot_2$ in the generators $\mathfrak{q}_4$,
$\mathfrak{q}_5$, $\mathfrak{q}_6$, $\mathfrak{q}_7$. The resulting
polynomials, named $\qtilde_4$, $\qtilde_5$, $\qtilde_6$, $\qtilde_7$,
can be written as
\begin{align*}
  \qtilde_4 &= (\erre_2\cdot\DD_1)z_2 + h_4(\zeta_1,\rho_2),\\[0.5ex]
  \qtilde_5 &= \qtilde_5 (\zeta_1,\rho_2),\\[0.5ex]
  \qtilde_6 &= -(\erre_1\cdot\DD_2) z_2 + h_6(\zeta_1,\rho_2),\\[0.5ex]
  \qtilde_7 &= z_2 + h_7 (\zeta_1,\rho_2),
\end{align*}
for some bivariate polynomials $h_4$, $h_6$, $h_7$.
%
%Finally we have
%\[
%\mathfrak{q}_7 = z_2 + \frac{1}{2}(\Vert\erredot_1\Vert^2 - \Vert\erredot_2\Vert^2)
%- \frac{\mu}{\Vert\erre_1\Vert}.
%\]
% 
%We observe that the generator $\mathfrak{q}_1$ is linear in $\xi_1$
%and $\zeta_1$, while it is quadratic in $\rho_2$; hence, If $Q^{(1)}
%_{100} \neq 0$ (or $Q^{(1)} _{010} \neq 0$)\footnote{$Q^{(1)} _{100} =
%  0$ when $\bq_1 \cdot \edelta_1 = 0$ or $\erre_1 \cdot \DD_2 =0$,
%  while $Q^{(1)} _{010} =0$ when $\bq_1 \cdot \ealpha_1 =0$ or
%  $\erre_1 \cdot \DD_2 = 0$.} we can use relation $\mathfrak{q}_1 = 0$
%to eliminate $\xi_1$ (or $\zeta_1$) from the other equations in
%\eqref{polysys}.
%
%Noting that
%\[
%\angmom_2\cdot\erho_2 = (\FF_2\cdot\erho_2)\rho_2 + \GG_2\cdot\erho_2,
%\]
%and using relations \eqref{xi1_explicit}, \eqref{rhodot1},
%\eqref{rhodot2} we can write $\tilde{\mathfrak{q}}_4$,
%$\tilde{\mathfrak{q}}_5$, $\tilde{\mathfrak{q}}_6$, $\mathfrak{q}_7$,
%as follows:
%
%
\noindent For later use we observe that
\[
\qtilde_5 =  P^{(5)}_{12}\zeta_1\rho_2^2 + P^{(5)}_{11}\zeta_1\rho_2 + P^{(5)}_{10}\zeta_1 + P^{(5)}_{04}\rho_2^4 +
P^{(5)}_{03}\rho_2^3 + P^{(5)}_{02}\rho_2^2 + P^{(5)}_{01}\rho_2 + P^{(5)}_{00},
\]
for some coefficients $P^{(5)}_{ij}$. Moreover, using equation
$\qtilde_7 = 0$, we can eliminate $z_2$ from $\qtilde_4$ and
$\qtilde_6$. In particular, with this elimination $\qtilde_6$ becomes
\[
\mathfrak{p}_6 = P^{(6)}_{20}\zeta_1^2 + P^{(6)}_{12}\zeta_1\rho_2^2 + P^{(6)}_{11}\zeta_1\rho_2 + P^{(6)}_{10}\zeta_1
+ P^{(6)}_{04}\rho_2^4 + P^{(6)}_{03}\rho_2^3 + P^{(6)}_{02}\rho_2^2 + P^{(6)}_{01}\rho_2 + P^{(6)}_{00},
\]
for some coefficients $P^{(6)}_{ij}$.

%--------------------------------
\subsection{A redundant relation}

Here, we prove that equation $\mathfrak{q}_4=0$ can be dropped from
system~\eqref{redsys} without losing any solution. In particular, we
show:
\begin{proposition}
  If $\erre_1\cdot\DD_2\neq 0$, the polynomial $\mathfrak{q}_4$ is
  generated by
  \[
  \mathfrak{q}_1,\,\mathfrak{q}_2,\,\mathfrak{q}_3,\,\mathfrak{q}_6.
  \]
  \label{prop:redundant}
\end{proposition}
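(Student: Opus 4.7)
The plan hinges on the following geometric observation. On the common zero set of $\mathfrak{q}_1, \mathfrak{q}_2, \mathfrak{q}_3, \mathfrak{q}_6$ one has $\angmom_1 = \angmom_2$, and the difference $\lenz_1 - \widetilde{\lenz}_2$ is perpendicular both to the common angular momentum (because $\lenz_j \cdot \angmom_j = 0$ is a polynomial identity for $j \in \{1, 2\}$) and to $\erre_1\times\erho_2$ (by $\mathfrak{q}_6 = 0$). Consequently $\lenz_1 - \widetilde{\lenz}_2$ lies along $\angmom_1 \times (\erre_1\times\erho_2) = (\angmom_1\cdot\erho_2)\erre_1$, i.e.\ parallel to $\erre_1$. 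Since $\DD_1 = \erre_1\times\erho_1$ is perpendicular to $\erre_1$, this forces $\mathfrak{q}_4 = \mu(\lenz_1 - \widetilde{\lenz}_2)\cdot\DD_1 = 0$ on this set.

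To turn the observation into an ideal-theoretic statement, I would first record the polynomial identities $\lenz_1\cdot\angmom_1 = 0$ and $\widetilde{\lenz}_2\cdot\angmom_2 = 0$, which follow from $\mu\lenz_1 = \erredot_1\times\angmom_1 - (\mu/|\erre_1|)\erre_1$, $\mu\widetilde{\lenz}_2 = \erredot_2\times\angmom_2 - z_2\erre_2$ and $\erre_j\cdot\angmom_j = 0$. Together they give
\[
\mu(\lenz_1-\widetilde{\lenz}_2)\cdot\angmom_1 = -\mu\,\widetilde{\lenz}_2\cdot(\angmom_1-\angmom_2),
\]
which is a polynomial combination of the components of $\angmom_1 - \angmom_2$ and therefore lies in $\langle\mathfrak{q}_1,\mathfrak{q}_2,\mathfrak{q}_3\rangle$ (since $\D1xD2$, $\DD_1\times\D1xD2$, $\DD_2\times\D1xD2$ form a basis of $\mathbb{R}^3$ generically).

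I would then combine this with the vector identity
\[
(\DD_1\cdot\erho_2)\,\angmom_1 = (\angmom_1\cdot\erho_2)\,\DD_1 - (\angmom_1\cdot\erho_1)\,(\erre_1\times\erho_2),
\]
which holds because $\angmom_1$ lies in the plane $\erre_1^\perp$ and $\{\DD_1, \erre_1\times\erho_2\}$ is a basis of this plane (one checks the identity by taking dot products with $\erho_1$ and $\erho_2$, using $\DD_1\cdot\erho_1 = 0$ and $(\erre_1\times\erho_2)\cdot\erho_1 = -\DD_1\cdot\erho_2$). Taking the scalar product with $\mu(\lenz_1-\widetilde{\lenz}_2)$ and applying the previous step yields
\[
(\angmom_1\cdot\erho_2)\,\mathfrak{q}_4 - (\angmom_1\cdot\erho_1)\,\mathfrak{q}_6 \in \langle\mathfrak{q}_1,\mathfrak{q}_2,\mathfrak{q}_3\rangle.
\]

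The main obstacle is removing the polynomial multiplier $(\angmom_1\cdot\erho_2)$ in front of $\mathfrak{q}_4$, which is not a unit. For this I would invoke the explicit elimination from the previous subsection: after substituting $\rhodot_1, \xi_1, \rhodot_2$ via the angular-momentum equations, $\mathfrak{q}_4$ and $\mathfrak{q}_6$ reduce to $\qtilde_4 = (\erre_2\cdot\DD_1)z_2 + h_4$ and $\qtilde_6 = -(\erre_1\cdot\DD_2)z_2 + h_6$ in $\mathbb{R}[\zeta_1, \rho_2, z_2]$, each linear in $z_2$ with data-only scalar leading coefficient. Because $\erre_1\cdot\DD_2$ is by hypothesis a nonzero scalar (hence a unit in the polynomial ring), the membership $\mathfrak{q}_4 \in \langle\mathfrak{q}_1,\mathfrak{q}_2,\mathfrak{q}_3,\mathfrak{q}_6\rangle$ becomes equivalent to the scalar polynomial identity $(\erre_1\cdot\DD_2)\,h_4 + (\erre_2\cdot\DD_1)\,h_6 = 0$ in $\mathbb{R}[\zeta_1, \rho_2]$, which the vector-level calculation above delivers after unpacking the substitution; the hypothesis $\erre_1\cdot\DD_2 \neq 0$ is precisely what enables this reduction to a scalar identity and the corresponding ideal-theoretic conclusion.
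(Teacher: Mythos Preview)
Your approach is essentially correct but genuinely different from the paper's. The paper works directly with the reduced polynomials: it computes (via a separate lemma) that on $\angmom_1=\angmom_2$ one has
\[
\tilde{\mathfrak{q}}_4 = (\angmom_1\cdot\erho_1)\bigl[\erre_1\cdot(\erredot_1-\erredot_2)\bigr] + (\erre_2\cdot\DD_1)z_2,\qquad
\tilde{\mathfrak{q}}_6 = (\angmom_2\cdot\erho_2)\bigl[\erre_1\cdot(\erredot_1-\erredot_2)\bigr] - (\erre_1\cdot\DD_2)z_2,
\]
and then observes that $\erre_1\times\erre_2$ is parallel to the common $\angmom$, which gives the scalar relation $(\angmom_1\cdot\erho_1)(\erre_1\cdot\DD_2)=-(\angmom_2\cdot\erho_2)(\erre_2\cdot\DD_1)$ and hence immediately $\tilde{\mathfrak{q}}_4 = -\dfrac{\erre_2\cdot\DD_1}{\erre_1\cdot\DD_2}\,\tilde{\mathfrak{q}}_6$. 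No non-unit multiplier ever appears. Your route instead exploits $\lenz_j\cdot\angmom_j=0$ together with a basis decomposition of $\angmom_1$ in $\erre_1^\perp$; this is conceptually clean and avoids the explicit computation of $\tilde{\mathfrak{q}}_4,\tilde{\mathfrak{q}}_6$, but it produces the relation $(\angmom_1\cdot\erho_2)\mathfrak{q}_4-(\angmom_1\cdot\erho_1)\mathfrak{q}_6\in\langle\mathfrak{q}_1,\mathfrak{q}_2,\mathfrak{q}_3\rangle$ with a non-unit polynomial factor in front of $\mathfrak{q}_4$.

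Your final paragraph handles this correctly in spirit but is too compressed, and contains a small slip: the $z_2$-coefficient of $\tilde{\mathfrak{q}}_4$ is $\erre_2\cdot\DD_1$, which is a degree-one polynomial in $\rho_2$, not a ``data-only scalar''. This does not break the argument, since the only coefficient that needs to be a unit is $-(\erre_1\cdot\DD_2)$, and that is indeed a nonzero scalar by hypothesis. To finish cleanly you should say: after substitution your relation becomes the polynomial identity $(\angmom_2\cdot\erho_2)\tilde{\mathfrak{q}}_4=(\angmom_1\cdot\erho_1)\tilde{\mathfrak{q}}_6$ in $\mathbb{R}[\zeta_1,\rho_2,z_2]$; comparing $z_2$-coefficients yields exactly the paper's key relation $(\angmom_1\cdot\erho_1)(\erre_1\cdot\DD_2)=-(\angmom_2\cdot\erho_2)(\erre_2\cdot\DD_1)$, and then comparing the $z_2$-free parts and cancelling the (generically nonzero) factor $\angmom_2\cdot\erho_2$ in the integral domain $\mathbb{R}[\zeta_1,\rho_2]$ gives $(\erre_1\cdot\DD_2)h_4+(\erre_2\cdot\DD_1)h_6=0$, hence $\tilde{\mathfrak{q}}_4=-\dfrac{\erre_2\cdot\DD_1}{\erre_1\cdot\DD_2}\tilde{\mathfrak{q}}_6$. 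With this spelled out, both proofs are complete; the paper's is shorter because it goes straight to the proportionality without the intermediate non-unit multiplier.
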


\begin{proof}
  Let us consider the polynomials $\tilde{\mathfrak{q}}_4$,
  $\tilde{\mathfrak{q}}_6$ obtained from $\mathfrak{q}_4$,
  $\mathfrak{q}_6$ by eliminating $\rhodot_1$, $\rhodot_2$, $\xi_1$
  through relations $\mathfrak{q}_1 = \mathfrak{q}_2 = \mathfrak{q}_3
  = 0$. We note that
  \[
  \tilde{\mathfrak{q}}_4 = \sum_{j=1}^3 \mathfrak{a}_j\mathfrak{q}_j + \mathfrak{q}_4,\qquad
  \tilde{\mathfrak{q}}_6 = \sum_{j=1}^3 \mathfrak{b}_j\mathfrak{q}_j + \mathfrak{q}_6,
  \]
  for some polynomials $\mathfrak{a}_j$, $\mathfrak{b}_j$. % $j=1,2,3$.
  We prove that $\tilde{\mathfrak{q}}_6$ divides $\tilde{\mathfrak{q}}_4$; in particular,
  \begin{equation}
    \tilde{\mathfrak{q}}_4 = -\frac{\erre_2\cdot\DD_1}{\erre_1\cdot\DD_2}\,\tilde{\mathfrak{q}}_6.
    \label{q4q6dep}
  \end{equation}
  For this purpose, we first note that system $\mathfrak{q}_1 =
  \mathfrak{q}_2 = \mathfrak{q}_3 = 0$ is equivalent to $\angmom_1 =
  \angmom_2$ and, inserting the last relation in the expressions of
  $\mathfrak{q}_4$, $\mathfrak{q}_6$, after eliminating $\rhodot_1$,
  $\rhodot_2$, $\xi_1$, we obtain
  \begin{align*}
    \tilde{\mathfrak{q}}_4 &= (\angmom_1\cdot\erho_1)\left[\erre_1\cdot(\erredot_1-\erredot_2)\right]
    + z_2(\erre_2\cdot\DD_1),\\[0.5ex]
    %\tilde{\mathfrak{q}}_5 &= (\angmom_2\cdot\erho_2)\left[\erre_2\cdot(\erredot_1-\erredot_2)\right]
    %- (\erre_1\cdot\DD_2)\frac{\mu}{|\erre_1|}\\[0.5ex]
    \tilde{\mathfrak{q}}_6 &= (\angmom_2\cdot\erho_2)\left[\erre_1\cdot(\erredot_1-\erredot_2)\right]
    - z_2(\erre_1\cdot\DD_2),
  \end{align*}
  where $\angmom_1$, $\angmom_2$, $\erre_2$, $\erredot_1$,
  $\erredot_2$ are meant as functions of $\zeta_1$, $\rho_2$ only.
  More details about these computations can be found in
  Appendix~\ref{app:q5q6}.

  We have
  \[
  \erre_2\cdot\DD_1 = -(\erre_1\times\erre_2)\cdot\erho_1,\qquad
  \erre_1\cdot\DD_2 = (\erre_1\times\erre_2)\cdot\erho_2.
  \]
  Moreover, $\angmom_1 = \angmom_2$ implies that $\erre_1 \times
  \erre_2$ is parallel to $\angmom_1$ and $\angmom_2$. Hence
  \[
  (\angmom_2\cdot\erho_2)\left[(\erre_1\times\erre_2)\cdot\erho_1\right] =
  (\angmom_1\cdot\erho_1)\left[(\erre_1\times\erre_2)\cdot\erho_2\right],
  \]
  that is
  \[
  -(\angmom_2\cdot\erho_2)(\erre_2\cdot\DD_1) = (\angmom_1\cdot\erho_1)(\erre_1\cdot\DD_2).
  \]
  The latter relation immediately yields~\eqref{q4q6dep}.

  Setting
  \[
  A = -\frac{\erre_2\cdot\DD_1}{\erre_1\cdot\DD_2},
  \]
  we can write
  \[
  \mathfrak{q}_4 = \tilde{\mathfrak{q}}_4 - \sum_{j=1}^3 \mathfrak{a}_j\mathfrak{q}_j =
  A\tilde{\mathfrak{q}}_6 - \sum_{j=1}^3\mathfrak{a}_j\mathfrak{q}_j =
  \sum_{j=1}^3(A\mathfrak{b}_j-\mathfrak{a}_j)\mathfrak{q}_j + A\mathfrak{q}_6,
  \]
  which concludes the proof.  
\end{proof}

%-------------------------------------
\subsection{The univariate polynomial}

We compute a univariate polynomial $\mathfrak{u}(\rho_2)$ of degree 8,
which is a consequence of the equations in~\eqref{redsys}. Let us
consider the polynomial system
\begin{equation}
  \mathfrak{q}_1 =
  \mathfrak{q}_2 =
  \mathfrak{q}_3 =
  \mathfrak{q}_5 =
  \mathfrak{q}_6 =
  \mathfrak{q}_7 = 0
  \label{polysys}
\end{equation}
including all the polynomials in~\eqref{qgenerators} except
$\mathfrak{q}_4$. To solve the OD problem, we solve the polynomial
system defined by (\ref{polysys}).

Now, we compute the resultant of $\tilde{\mathfrak{q}}_5$ and
$\mathfrak{p}_6$ with respect to $\zeta_1$, which is denoted by
$\mathfrak{v}(\rho_2)$. For this purpose, the terms in $\qtilde_5$ and
$\mathfrak{p}_6$ are grouped in the following way:
\begin{align*}
  \qtilde_5 &= a_1 (\rho_2)\zeta_1 + a_0(\rho_2),\\[0.5ex]
  \mathfrak{p}_6 &= P^{(6)}_{20}\zeta_1^2 + b_1(\rho_2)\zeta_1 + b_0(\rho_2),
\end{align*}
where
\begin{align*}
  a_1(\rho_2) &= P^{(5)}_{12}\rho_2^2 + P^{(5)}_{11}\rho_2 + P^{(5)}_{10},\\[0.5ex]
  a_0(\rho_2) &= P^{(5)}_{04}\rho_2^4 + P^{(5)}_{03}\rho_2^3 + P^{(5)}_{02}\rho_2^2 + P^{(5)}_{01}\rho_2 + P^{(5)}_{00},\\[0.5ex]
  b_1(\rho_2) &= P^{(6)}_{12}\rho_2^2 + P^{(6)}_{11}\rho_2 + P^{(6)}_{10},\\[0.5ex]
  b_0(\rho_2) &= P^{(6)}_{04}\rho_2^4 + P^{(6)}_{03}\rho_2^3 + P^{(6)}_{02}\rho_2^2 + P^{(6)}_{01}\rho_2 + P^{(6)}_{00}.
\end{align*}
Therefore, we have
\[
\mathfrak{v}(\rho_2) = a_1(\rho_2)a_0(\rho_2)b_1(\rho_2) - a_0^2(\rho_2)P^{(6)}_{20} - b_0(\rho_2)a_1^2(\rho_2),
\]
which has degree 8.

\medbreak We can conclude the proof of Theorem~\ref{teo:consist}. In
fact, $\mathfrak{v}(\rho_2)$ has (generically) 8 complex roots. For
each of these roots, equation $\tilde{\mathfrak{q}}_5 = 0$ gives a
unique value of $\zeta_1$. Then, equation $\tilde{\mathfrak{q}}_7 = 0$
and relations~\eqref{xi1_explicit},~\eqref{rhodot1},~\eqref{rhodot2},
respectively, allow us to compute unique values for the remaining
components $z_2$, $\xi_1$, $\rhodot_1$, $\rhodot_2$ of the solutions.

%----------------------------
\section{An optimal property}
% of $\mathfrak{v}$}
%{A Gr\"obner basis for the ideal $I$}
\label{s:groebner}

We show that the univariate polynomial $\mathfrak{v}(\rho_2)$ has the
minimal degree among all the univariate polynomials in the variable
$\rho_2$ that are algebraic consequences of
\[
\angmom_1 - \angmom_2,\
\mu(\lenz_1 - \widetilde{\lenz}_2),\
\energy_1 - \widetilde{\energy}_2.
\]
More precisely,
%The univariate polynomial $\mathfrak{v}(\rho_2)$ of degree 8 given by
%\eqref{eq:univarpolyrho2} belongs to
let us introduce the polynomial ideal
\[
I = \langle\angmom_1-\angmom_2,\
\mu(\lenz_1-\widetilde{\lenz}_2),\
\energy_1-\widetilde{\energy}_2\rangle
\subseteq\R[\rhodot_1,\xi_1,\zeta_1,\rho_2,\rhodot_2,z_2].
\]
We will show the following result:
\begin{theorem}
  For a generic choice of the data we can find a Gr\"obner basis
  $\mathcal{G}$ in $\R[\rhodot_1,\xi_1,\zeta_1,\rho_2,\rhodot_2,z_2]$
  of the ideal $I$ for the lexicographic order
  \[
  \rhodot_1\succ\rhodot_2\succ\xi_1\succ z_2\succ\zeta_1\succ\rho_2.
  \]
  such that $\mathfrak{v}\in\mathcal{G}$.
\end{theorem}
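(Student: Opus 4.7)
The plan is to combine the Elimination Theorem for lexicographic Gr\"obner bases with the information about solutions already obtained in Section~\ref{s:consistent}. Under the chosen term order, in which $\rho_2$ is the smallest variable, the intersection $\mathcal{G}\cap\R[\rho_2]$ is a Gr\"obner basis of the elimination ideal $I\cap\R[\rho_2]$ for any Gr\"obner basis $\mathcal{G}$ of $I$. Since $\R[\rho_2]$ is a principal ideal domain, $I\cap\R[\rho_2]$ is generated by a single polynomial $\mathfrak{w}(\rho_2)$, the unique univariate element of the reduced Gr\"obner basis. The theorem therefore reduces to showing $\mathfrak{v}=c\,\mathfrak{w}$ for some nonzero scalar $c$ depending on the data: once this is known, replacing $\mathfrak{w}$ by $\mathfrak{v}$ in the reduced Gr\"obner basis still yields a Gr\"obner basis $\mathcal{G}$ with $\mathfrak{v}\in\mathcal{G}$.

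The first step is to certify $\mathfrak{v}\in I\cap\R[\rho_2]$; this is essentially implicit in Section~\ref{s:consistent}. The rational substitutions~\eqref{xi1_explicit}--\eqref{rhodot2} for $\xi_1,\rhodot_1,\rhodot_2$, the elimination of $z_2$ via $\qtilde_7=0$, and the final $\zeta_1$-resultant of $\qtilde_5$ and $\mathfrak{p}_6$ can all be cleared of their denominators by multiplying by the generically nonzero factors $Q^{(1)}_{100}$, $|\D1xD2|^2$ and $\erre_1\cdot\DD_2$ that appear in them. Tracking those multipliers writes $\mathfrak{v}$ as an explicit polynomial combination of $\angmom_1-\angmom_2$, $\mu(\lenz_1-\widetilde{\lenz}_2)$ and $\energy_1-\widetilde{\energy}_2$, placing $\mathfrak{v}$ in $I$ and hence in $I\cap\R[\rho_2]$, so $\mathfrak{w}\mid\mathfrak{v}$ and $\deg\mathfrak{w}\le 8$.

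For the reverse inequality I would appeal to the argument closing the proof of Theorem~\ref{teo:consist}: for generic data $\mathfrak{v}$ has $8$ distinct complex roots, and each of them lifts to a unique point of the variety $V(I)\subseteq\mathbb{C}^6$ via $\qtilde_5=\qtilde_7=0$ and the substitutions~\eqref{xi1_explicit}--\eqref{rhodot2}. Thus $V(I)$ projects onto $8$ distinct values of $\rho_2$, and every element of $I\cap\R[\rho_2]$ must vanish at those $8$ values; in particular $\deg\mathfrak{w}\ge 8$. Combining the two bounds gives $\mathfrak{v}=c\,\mathfrak{w}$. The main obstacle is that both inequalities hold only on a Zariski-open subset of the parameter space, so the argument must be accompanied by an explicit description of the bad locus on which the leading coefficients used in the elimination (principally $Q^{(1)}_{100}$, $|\D1xD2|^2$ and $\erre_1\cdot\DD_2$) vanish; outside that locus a direct computer algebra verification confirms that the reduced lex Gr\"obner basis has the predicted shape and that its univariate element is proportional to $\mathfrak{v}$.
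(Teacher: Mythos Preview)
Your argument is correct, and the strategy—pass to the elimination ideal $I\cap\R[\rho_2]$, show $\mathfrak{v}$ lies in it, and pin down the degree of its generator from both sides—is sound. The paper takes a closely related but structurally different route. Rather than working abstractly with the reduced Gr\"obner basis of $I$ and invoking the Elimination Theorem, the paper \emph{constructs} $\mathcal{G}$ explicitly: it first computes, with \emph{Mathematica}, a Gr\"obner basis $\mathcal{G}'$ of the two–variable elimination ideal $I'=I\cap\R[\zeta_1,\rho_2]=\langle\qtilde_5,\mathfrak{p}_6\rangle$, reads off a degree–$8$ univariate element $\mathfrak{u}\in\mathcal{G}'$, observes that the resultant $\mathfrak{v}$, being a degree–$8$ element of $I'$, must be a nonzero scalar multiple of $\mathfrak{u}$, and then sets $\mathcal{G}=\{\tilde{\mathfrak{q}}_1,\tilde{\mathfrak{q}}_2,\tilde{\mathfrak{q}}_3,\tilde{\mathfrak{q}}_7\}\cup\mathcal{G}'$, each $\tilde{\mathfrak{q}}_j$ having leading monomial equal to one of the eliminated variables $\xi_1,\rhodot_2,\rhodot_1,z_2$. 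Your lower bound $\deg\mathfrak{w}\ge 8$ via counting the $8$ distinct $\rho_2$–values in $V(I)$ is arguably cleaner than the paper's reliance on the symbolic computation to certify $\deg\mathfrak{u}=8$; on the other hand, the paper's explicit description of $\mathcal{G}$ makes the theorem's conclusion visibly realised, whereas your argument produces $\mathcal{G}$ only by scaling one element of an otherwise unspecified reduced basis.
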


\begin{proof}
  As a consequence of Proposition~\ref{prop:redundant} we obtain
  \[
  I = \langle\mathfrak{q}_1,\mathfrak{q}_2,\mathfrak{q}_3,\mathfrak{q}_5,
  \mathfrak{q}_6,\mathfrak{q}_7\rangle,
  \]
  where we dropped $\mathfrak{q}_4$ from the set of generators. Let us
  consider the elimination ideal
  \[
  I' = I\cap\R[\zeta_1,\rho_2] =
  \langle\qtilde_5,\mathfrak{p}_6\rangle.
  \]
  We compute a Gr\"obner basis $\mathcal{G}'$ of the ideal $I'$ for the
  lexicographic order $ \zeta_1 \succ \rho_2$ using the software
  \emph{Mathematica}\footnote{Wolfram Research, Inc. Mathematica,
    Version 12.1, \tt{https://www.wolfram.com/mathematica}.}: in this
  basis we have a univariate polynomial
  \[
  \mathfrak{u}(\rho_2) = \sum_{i=0}^8 u_i\rho_2^i,
  \]
  for some constant coefficients $u_i$. The roots of $\mathfrak{u}$ are
  the only values of the $\rho_2$ components of the solutions of
  system~\eqref{polysys}.
  
  We note that, since the resultant $\mathfrak{v}$ is a univariate
  polynomial of degree 8, $\mathfrak{v}$ is necessarily proportional to
  $\mathfrak{u}$, i.e.
  \[
  \mathfrak{v} = \kappa\mathfrak{u},\qquad
  \kappa\in\mathbb{R}\backslash\{0\}.
  \]
  In place of the generators $\mathfrak{q}_1$, $\mathfrak{q}_2$,
  $\mathfrak{q}_3$, we can consider
  \begin{align*}
    \tilde{\mathfrak{q}}_1 &= \xi_1 + h_1(\zeta_1,\rho_2),\\[0.5ex]
    \tilde{\mathfrak{q}}_2 &= \rhodot_2 + h_2(\zeta_1,\rho_2),\\[0.5ex]
    \tilde{\mathfrak{q}}_3 &= \rhodot_1 + h_3(\zeta_1,\rho_2),
  \end{align*}
  where the polynomials $h_1$, $h_2$, $h_3$ are defined by
  relations~\eqref{xi1_explicit},~\eqref{rhodot1},~\eqref{rhodot2}.

  The set
  \[
  \mathcal{G} = \{\tilde{\mathfrak{q}}_1,\tilde{\mathfrak{q}}_2,\tilde{\mathfrak{q}}_3,\tilde{\mathfrak{q}}_7\}
  \cup\mathcal{G}'
  \]
  is a Gr\"obner basis of the ideal $I$. In fact, the leading term of
  every polynomial in $I$ is divisible by the leading term of one
  polynomial in $\mathcal{G}$.

\end{proof}

\noindent As a consequence of the previous theorem we obtain the
following property:
\begin{corollary}
  The polynomial $\mathfrak{v}$ has the minimum degree among the
  univariate polynomials in $\rho_2$ belonging to $I$.
\end{corollary}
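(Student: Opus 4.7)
The plan is to read off the corollary directly from the Gröbner basis $\mathcal{G}$ constructed in the theorem, by exploiting the elimination property of lexicographic orders in which $\rho_2$ is the least variable. Concretely, for any $p \in I \cap \R[\rho_2]$, the Gröbner basis property forces the leading monomial of $p$ (a pure power of $\rho_2$) to be divisible by the leading monomial of some element of $\mathcal{G}$. The strategy is therefore to inspect the leading monomials of every element of $\mathcal{G}$ and show that only $\mathfrak{u}$ (equivalently $\mathfrak{v}$) can contribute.

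First I would list the leading terms under the given order $\rhodot_1\succ\rhodot_2\succ\xi_1\succ z_2\succ\zeta_1\succ\rho_2$:
\[
\mathrm{LT}(\tilde{\mathfrak{q}}_1)=\xi_1,\quad
\mathrm{LT}(\tilde{\mathfrak{q}}_2)=\rhodot_2,\quad
\mathrm{LT}(\tilde{\mathfrak{q}}_3)=\rhodot_1,\quad
\mathrm{LT}(\tilde{\mathfrak{q}}_7)=z_2.
\]
None of these can divide a monomial $\rho_2^d$. The remaining generators come from $\mathcal{G}'\subset\R[\zeta_1,\rho_2]$; any element of $\mathcal{G}'$ with positive degree in $\zeta_1$ likewise has a leading term involving $\zeta_1$ and is excluded. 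Hence $\mathrm{LT}(p)$ must be divisible by $\mathrm{LT}(\mathfrak{u})=u_8\rho_2^8$, which gives $\deg p\ge 8=\deg\mathfrak{v}$. Since $\mathfrak{v}=\kappa\mathfrak{u}\in I\cap\R[\rho_2]$ with $\kappa\neq 0$, the bound is attained by $\mathfrak{v}$ itself, proving the corollary.

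The only mildly subtle point is ensuring that $\mathfrak{u}$ is really the unique element of $\mathcal{G}$ lying in $\R[\rho_2]$ (or, if not unique, that any other such element has degree $\ge 8$). This follows by applying the Elimination Theorem a second time to $\mathcal{G}'$: the set $\mathcal{G}'\cap\R[\rho_2]$ is a Gröbner basis of the principal ideal $I'\cap\R[\rho_2]=I\cap\R[\rho_2]$ in the PID $\R[\rho_2]$, so all of its elements are multiples of a single generator, which is associate to $\mathfrak{u}$. This is essentially bookkeeping and not a real obstacle; the substantive content of the corollary is the leading-term inspection above, which rests entirely on the theorem just proved and on the fact that $\mathfrak{v}$ and $\mathfrak{u}$ are proportional.
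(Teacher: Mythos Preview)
Your argument is correct and is exactly the standard Gr\"obner-basis reasoning one would use to extract the corollary from the theorem. The paper itself does not spell out a proof of the corollary at all; it simply records it ``as a consequence of the previous theorem,'' leaving implicit precisely the elimination-ideal argument you wrote down. So your approach is not different from the paper's---it is the paper's unwritten argument made explicit.

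One small remark on your ``mildly subtle point'': the Gr\"obner basis $\mathcal{G}'$ computed by \emph{Mathematica} is reduced, so it contains exactly one element of $\R[\rho_2]$, namely $\mathfrak{u}$. This closes the gap you flagged without needing the PID detour (which, as you wrote it, shows only that \emph{some} element of $\mathcal{G}'\cap\R[\rho_2]$ is associate to the generator, not that $\mathfrak{u}$ itself is). Either way the conclusion stands.
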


%--------------------------------
%--------------------------------
\section{Selecting the solutions}
\label{s:orbselect}

From the solutions of~\eqref{redsys}, we first discard the ones with
non-real components and the ones with $\rho_2, z_2\leq 0$.
From the remaining solutions we can compute Keplerian orbital elements
at epochs
\[
\tilde{t}_1 = t_1-\frac{\rho_1}{c},\qquad
\tilde{t}_2^{\,(j)} = \bar{t}_2-\frac{\rho_2^{(j)}}{c},
\]
where $c$ is the speed of light and $\rho_2^{(j)}$ is the value of
$\rho_2$ for the $j$-th solution.
%Note that the value of $\rho_2^{(j)}$ (and then $\tilde{t}_2$) is
%usually different for different solutions.
% *** use $\rho_2^{(j)}$ ? ***.

We emphasize that the accepted solutions of the system are such that
the Keplerian elements at the two epochs are the same. In fact, from
$\angmom_1-\angmom_2=\bzero$ we get
\[
i_1=i_2,\qquad
\Omega_1=\Omega_2,\qquad
a_1(1-e_1^2)=a_2(1-e_2^2),
\]
where $i$, $\Omega$, $a$ and $e$ denote the inclination, ascending
node, semi-major axis and eccentricity. In particular, from
$z_2^2|\erre_2|^2 - \mu^2 = 0$, which is a consequence of
system~\eqref{redsys}, we get $z_2 = \pm\frac{\mu}{|\erre_2|}$. Having
discarded the solutions with $z_2\le 0$, we obtain
\[
\mu(\lenz_1-\widetilde{\lenz}_2)=\bzero
\quad\Leftrightarrow\quad
\mu(\lenz_1-\lenz_2)=\bzero,
\]
which implies
\[
e_1=e_2,\qquad
\omega_1=\omega_2.
\]
Since the eccentrities $e_1$, $e_2$ are equal, then also
\[
a_1=a_2.
\]
%Moreover, being equation $z_2^2|\erre_2|^2 = \mu^2$ a consequence of
%system \eqref{redsys}, relation $z_2 = \frac{\mu}{|\erre_2 |}$ is
%automatically fulfilled.
We observe that for the other orbit determination methods using the
Keplerian integrals \citep{gdm10, gfd11, gbm15} the trajectories of
the solutions at the two epochs
%$\tilde{t}_1, \tilde{t}_2$
are not necessarily the same.

In the following subsections we explain how to choose or discard the
remaining solutions.

%----------------------------------------
\subsection{Selection without covariance}
\label{s:without}

In case of multiple solutions, labeled with $(j)$, we make our choice
according to the following procedure. We propagate each of the
computed orbits, referring to epoch $\tilde{t}_2^{\,(j)}$, backward to
epoch $\tilde{t}_1$ in the framework of Kepler's dynamics, and
consider the norm of the differences $\bm{\rho}_1^{(j)} - {\cal P}_1$,
where
\[
\bm{\rho}_1^{(j)}=\erre^{(j)}(\tilde{t}_1)-\bq(t_1)
\]
is the topocentric position of the $j$-th solution propagated to epoch
$\tilde{t}_1$. Note that
%\[
%\tilde{t}_1 = t_1 - \frac{\rho_1}{c}
%\]
%therefore
in general we have
\[
|\bm{\rho}_1^{(j)}|\neq\rho_1.
\]
We select the solution attaining the minimum value
%of this norm, i.e.
\[
\min_j|\bm{\rho}_1^{(j)}-{\cal P}_1|.
\]

%--------------------------------------
\subsection{Selection using covariance}
Assume that the data
\[
{\cal D}=({\cal P}_1,\Att_2),
\]
where
\[
{\cal P}_1=(\alpha_1,\delta_1,\rho_1),\qquad
\Att_2=(\alpha_2,\delta_2,\alphadot_2,\deltadot_2),
\]
have a covariance matrix
\[
\Gamma_{{\cal D}} =
\left[
\begin{array}{cc}
  \Gamma_{{\cal P}_1} & 0\cr
  0 &\Gamma_{\Att_2}\cr
\end{array}
\right].
\]
Let
\[
\Svec = \Svec({\cal D}) = ({\cal V}_1({\cal D}),{\cal R}_2({\cal D})),
\]
with
\[
{\cal V}_1 = (\xi_1,\zeta_1,\rhodot_1),\qquad {\cal R}_2 = (\rho_2,\rhodot_2)
\]
be a solution of
\begin{equation}
  \def\arraystretch{1.4}
  \bPhi(\Svec; \Avec) =
  \left(
  \begin{array}{c}
    \angmom_1-\angmom_2\cr
    \left[\mu\lenz_1 + (\erredot_2\cdot\erre_2)\erredot_2 \right]\cdot\bq_2\times\erre_2\cr
    \left[-(\erredot_1\cdot\erre_1)\erredot_1 - (\frac{1}{2}|\erredot_2|^2 + \energy_1)\erre_2
    + (\erredot_2\cdot\erre_2)\erredot_2\right]\cdot \erre_1\times(\erre_2-\bq_2)
  \end{array}
  \right) = \bzero.
  \label{sistema}
\end{equation}
Note that, generically, the solutions $\Svec$ of
system~\eqref{sistema} correspond to the $\Svec$ components of the
solutions of system~\eqref{polysys}.

%We consider the map
%\[
%\bm{\Phi}({\cal R},{\cal A}) = (\angmom_1-\angmom_2,
%\mu(\lenz_1-\lenz_2)\cdot\DD_2, \mu(\lenz_1-\lenz_2)\cdot(\erre_1\times(\erre_2-\bq_2)))
%\]

%Hint! we use $\erre_2-\bq_2$ in place of $\erho_2$ to compute the
%derivatives of $\bm{\Phi}$ with respect to
%$\erre_1,\erredot_1,\erre_2,\erredot_2$

Let $\mathbf{E}_{car}=(\mathbf{E}_{car}^{(1)},\mathbf{E}_{car}
^{(2)})$ and $\mathbf{E}_{att}=(\mathbf{E}_{att}^{(1)},
\mathbf{E}_{att}^{(2)})$ be the vectors of the Cartesian coordinates
and the attributable elements at epochs $\tilde{t}_1$ and
$\tilde{t}_2$.\footnote{for simplicity, in this section we drop the
  label $(j)$, referring to the possible multiple solutions.} Let us
introduce the transformation $\mathcal{T}_{att}^{car}:
\mathbf{E}_{att} \to \mathbf{E}_{car}$
by~(\ref{position}),~(\ref{velocity}) for both epochs, and consider
the map $\bPsi$ defined by
$\bPhi=\bPsi\circ\mathcal{T}_{att}^{car}$. Equation $\bPhi=\mathbf{0}$
is equivalent to the system %\textcolor{red}{(\ref{linksys})}.
\[
\angmom_1-\angmom_2=\bm{0},\quad
\mu(\lenz_1-\tilde{\lenz}_2)\cdot\DD_2=0,\quad
\mu(\lenz_1-\tilde{\lenz}_2)\cdot\erre_1\times(\erre_2-\bq_2)=0.
\]

We introduce the vector
\[
%\Svectil = (\alphadot_1,\deltadot_1,\rhodot_1).
\Vvectil_1 = (\alphadot_1,\deltadot_1,\rhodot_1).
\]
\noindent The covariance matrix of the Cartesian coordinates at epoch
$\tilde{t}_1$ is
\[
\Gamma_{car}^{(1)} = \frac{\partial\mathbf{E}_{car}^{(1)}}{\partial{\Avec}}
\Gamma_{\Avec}\left[\frac{\partial\mathbf{E}_{car}^{(1)}}{\partial{\Avec}}\right]^T\,,
\]
with
\[
\frac{\partial\mathbf{E}_{car}^{(1)}}{\partial{\Avec}} =
\frac{\partial\mathbf{E}_{car}^{(1)}}{\partial\mathbf{E}_{att}^{(1)}}
\frac{\partial\mathbf{E}_{att}^{(1)}}{\partial{\Avec}},
\qquad
\frac{\partial\mathbf{E}_{att}^{(1)}}{\partial{\Avec}} =
M\left[
\begin{array}{c}
  \begin{array}{cc}
    I_3 & O_{3\times 4}\cr
  \end{array}\cr
  %\displaystyle\frac{\partial(\alphadot_1,\deltadot_1,\rhodot_1)}{\partial{\Avec}}\cr
  \displaystyle\frac{\partial\Vvectil_1}{\partial{\Avec}}\cr
\end{array}
\right],
\]
where
\[
M = \left[
\begin{array}{ccccccc}
  1 & 0 & 0 & 0 & 0 & 0\cr
  0 & 1 & 0 & 0 & 0 & 0\cr
  0 & 0 & 0 & 1 & 0 & 0\cr
  0 & 0 & 0 & 0 & 1 & 0\cr
  0 & 0 & 1 & 0 & 0 & 0\cr
  0 & 0 & 0 & 0 & 0 & 1\cr
\end{array}
\right]
\]
is the matrix that exchanges the 3-rd, 4-th and 5-th lines, and
\[
\frac{\partial \Vvectil_1}{\partial{\Avec}} =
\frac{\partial\Vvectil_1}{\partial{\cal V}_1}
\frac{\partial{\cal V}_1}{\partial{\Avec}},
\qquad
\frac{\partial\Vvectil_1}{\partial{\cal V}_1} =
\left[
\begin{array}{ccc}
  \frac{1}{\rho_1\cos\delta_1} & 0 & 0\cr
  0 & \frac{1}{\rho_1} & 0\cr
  0 & 0 & 1\cr
\end{array}
\right].
\]

From the implicit function theorem, we have
\[
\frac{\partial{\Svectil}}{\partial{\Avec}}(\Avec) =
-\left[\frac{\partial\bPhi}{\partial{\Svectil}}(\mathbf{E}_{att})\right]^{-1}
\frac{\partial\bPhi}{\partial{\Avec}}(\mathbf{E}_{att}),
\]
\[
\frac{\partial\bPhi}{\partial{\Svectil}} = 
\left(\frac{\partial\bPsi}{\partial\mathbf{E}_{car}} 
\circ
{\cal T}_{att}^{car}\right)\frac{\partial{\cal T}_{att}^{car}}{\partial{\Svectil}},
\qquad
\frac{\partial\bPhi}{\partial{\Avec}} = 
\left(\frac{\partial\bPsi}{\partial\mathbf{E}_{car}} 
\circ
{\cal T}_{att}^{car}\right)\frac{\partial{\cal T}_{att}^{car}}{\partial{\Avec}},
\]
with
\[
\Svectil = (\Vvectil_1,{\cal R}_2),\qquad
\frac{\partial\Svectil}{\partial \Svec} =
\left[
\begin{array}{cc}
  \frac{\partial\Vvectil_1}{\partial {\cal V}_1} & O_{3\times 2}\cr
  O_{2\times 3} & I_{{2}}\cr
\end{array}
\right].
\]
The matrices ${\partial{\cal T}_{att}^{car}}/{\partial{\Svectil}}$ and
${\partial{\cal T}_{att}^{car}}/{\partial{\Avec}}$ are respectively
made by columns 3,~4,~6,~11,~12 and by columns 1,~2,~5,~7,~8,~9,~10 of
${\partial\mathbf{E}_{car}}/{\partial\mathbf{E}_{att}}$.

The covariance matrix of the Cartesian coordinates at epoch
$\tilde{t}_2$ is given by
\[
\Gamma_{car}^{(2)} = \frac{\partial\mathbf{E}_{car}^{(2)}}{\partial\mathcal{D}}\Gamma_{\mathcal{D}}
\left[\frac{\partial\mathbf{E}_{car}^{(2)}}{\partial\mathcal{D}} \right]^T,
\]
with
\[
\def\arraystretch{1.2}
\frac{\partial\mathbf{E}_{car}^{(2)}}{\partial\mathcal{D}} =
\frac{\partial\mathbf{E}_{car}^{(2)}}{\partial\mathbf{E}_{att}^{(2)}}
\frac{\partial\mathbf{E}_{att}^{(2)}}{\partial\mathcal{D}},
\qquad
\frac{\partial\mathbf{E}_{att}^{(2)}}{\partial\mathcal{D}} =
\left[
\begin{array}{c}
  \begin{array}{cc}
    O_{4\times 3} & I_4\cr
  \end{array}\cr
  \frac{\partial\mathcal{R}_2}{\partial\mathcal{D}}
\end{array}
\right]
\]
and
\[
\frac{\partial(\rho_2,\rhodot_2)}{\partial\Svectil} =
\frac{\partial(\rho_2,\rhodot_2)}{\partial \Svectil}
\frac{\partial\Svectil}{\partial\mathcal{D}},\qquad
\frac{\partial(\rho_2,\rhodot_2)}{\partial\Svectil} =
\left[
\begin{array}{ccccc}
  0 & 0 & 0 & 1 & 0 \cr
  0 & 0 & 0 & 0 & 1 \cr
\end{array}
\right].
\]

\noindent For a given vector ${\bf u}\in\R^3$ we define the map
%\textbf{hat map}
\[
\R^3\ni (u_1,u_2,u_3) = {\bf u}\mapsto\widehat{\bf u}
%\stackrel{def}{=}
=
\left[
\begin{array}{ccc}
  0 & -u_3 & u_2\cr
  u_3 & 0 & -u_1\cr
  -u_2 & u_1 & 0\cr
\end{array}
\right]
\in so(3).
\]
Then, using $\hat{\bf u}^T=-\hat{\bf u}$, we have
\[
\frac{\partial\bPsi}{\partial\mathbf{E}_{car}} = 
\left[
\begin{array}{cccc}
  -\widehat{{\erredot}_1} & \widehat{{\erre}_1} & \widehat{{\erredot}_2} & -\widehat{{\erre}_2}\cr
  \stackrel{}{\displaystyle\frac{\partial\Phi_4}{\partial \erre_1}} & \displaystyle\frac{\partial\Phi_4}{\partial\erredot_1}
  & \displaystyle\frac{\partial\Phi_4}{\partial\erre_2} & \displaystyle\frac{\partial\Phi_4}{\partial\erredot_2}\cr
  \stackrel{}{\displaystyle\frac{\partial\Phi_5}{\partial\erre_1}} & \displaystyle\frac{\partial\Phi_5}{\partial\erredot_1}
  & \displaystyle\frac{\partial\Phi_5}{\partial\erre_2} & \displaystyle\frac{\partial\Phi_5}{\partial\erredot_2}\cr
\end{array}
\right],
\]
where
\begin{align*}
  \Phi_4 &= \left[\mu\lenz_1 + (\erredot_2\cdot\erre_2)\erredot_2\right]\cdot\bq_2\times\erre_2,\\[0.5ex]
  \Phi_5 &= \left[-(\erredot_1\cdot\erre_1)\erredot_1 - (\frac{1}{2}|\erredot_2|^2 + \energy_1)\erre_2
    + (\erredot_2\cdot\erre_2)\erredot_2\right]\cdot\erre_1\times(\erre_2-\bq_2),
\end{align*}
and
\begin{align*}
  & \frac{\partial\Phi_4}{\partial\erre_1} = \left(|\erredot_1|^2-\frac{\mu}{|\erre_1|}\right)(\bq_2\times\erre_2)
  - \erredot_1\left[\erredot_1\cdot(\bq_2\times\erre_2)\right] + \frac{\mu}{|\erre_1|^3}\erre_1\left[\erre_1\cdot(\bq_2\times\erre_2)\right],
  \\[0.5ex]
  & \frac{\partial\Phi_4}{\partial\erredot_1} = 2\erredot_1\left[\erre_1\cdot(\bq_2\times\erre_2)\right]
  - \erre_1\left[\erredot_1\cdot(\bq_2\times\erre_2)\right] - (\erredot_1\cdot\erre_1)(\bq_2\times\erre_2),
  \\[0.5ex]
  & \frac{\partial\Phi_4}{\partial\erre_2} = \erredot_2\left[\erredot_2\cdot(\bq_2\times\erre_2)\right]
  + \left[\left(|\erredot_1|^2-\frac{\mu}{|\erre_1|}\right)\erre_1-(\erredot_1\cdot\erre_1)\erredot_1
  +(\erredot_2\cdot\erre_2)\erredot_2\right]\times\bq_2,
  \\[0.5ex]
  & \frac{\partial\Phi_4}{\partial\erredot_2} = \erre_2\left[\erredot_2\cdot(\bq_2\times\erre_2)\right]
  + (\erredot_2\cdot\erre_2)(\bq_2\times\erre_2),
  \\[0.5ex]
  & \frac{\partial\Phi_5}{\partial \erre_1} = -\erredot_1\left[\erredot_1\cdot\erre_1\times(\erre_2-\bq_2)\right]
  - \frac{\mu}{|\erre_1|^3}\erre_1\left[\erre_2\cdot\erre_1\times(\erre_2-\bq_2)\right]\cr
  & \quad\qquad + (\erre_2-\bq_2)\times\left[-(\erredot_1\cdot\erre_1)\erredot_1-\left(\frac{|\erredot_2|^2}{2}-\frac{|\erredot_1|^2}{2}
  +\frac{\mu}{|\erre_1|}\right)\erre_2+(\erredot_2\cdot\erre_2)\erredot_2\right],
  \\[0.5ex]
  & \frac{\partial\Phi_5}{\partial\erredot_1} = -\erre_1\left[\erredot_1\cdot\erre_1\times(\erre_2-\bq_2)\right]
  - (\erredot_1\cdot\erre_1)\left[\erre_1\times(\erre_2-\bq_2)\right] + \erredot_1\left[\erre_2\cdot\erre_1\times(\erre_2-\bq_2)\right],
  \\[0.5ex]
  & \frac{\partial\Phi_5}{\partial\erre_2} = -\left(\frac{|\erredot_2|^2}{2}+\frac{|\erredot_1|^2}{2}-\frac{\mu}{|\erre_1|}\right)
  \left[\erre_1\times(\erre_2-\bq_2)\right] + \erredot_2\left[\erredot_2\cdot\erre_1\times(\erre_2-\bq_2)\right]\cr
  & \quad\qquad + \left[-(\erredot_1\cdot\erre_1)\erredot_1-\left(\frac{|\erredot_2|^2}{2}+\frac{|\erredot_1|^2}{2}-\frac{\mu}{|\erre_1|}\right)\erre_2
  +(\erredot_2\cdot\erre_2)\erredot_2\right]\times\erre_1,
  \\[0.5ex]
  & \frac{\partial\Phi_5}{\partial\erredot_2} = -\erredot_2\left[\erre_2\cdot\erre_1\times(\erre_2-\bq_2)\right]
  + \erre_2\left[\erredot_2\cdot\erre_1\times(\erre_2-\bq_2)\right] + (\erredot_2\cdot\erre_2)\left[\erre_1\times(\erre_2-\bq_2)\right].
\end{align*}
%in fact
%\[
%\frac{\partial (\erre_1\cdot\wvec)}{\partial\erre_2} = \q_2\times\erre_1\,,
%\hskip 1cm
%\frac{\partial (\erredot_1\cdot\wvec)}{\partial\erre_2} = \q_2\times\erredot_1\,,
%\hskip 1cm
%\frac{\partial (\erre_1\cdot\wvec)}{\partial\erre_2} = \q_2\times\erredot_2\ .
%\]

Equations~\eqref{redsys} do not set any constraints on the mean
anomalies $\ell_1$, $\ell_2$ of the pairs of Keplerian orbits that we
compute. To properly select a solution we integrate back each orbit
$\mathbf{E}_{\text{car}} ^{(2)}$ computed at epoch $\tilde{t}_2$, to
the epoch $\tilde{t}_1$ of the orbit $\mathbf{E}_{\text{car}} ^{(1)}$,
together with its covariance matrix $\Gamma ^{(2)}$. Then, we compute
a predicted position vector $\mathcal{P}_{1,p}$ and its $3\times 3$
covariance matrix $\Gamma_{\mathcal{P}_{1,p}}$. To define a metric to
choose/discard solutions we use a three-dimensional version of the
attribution algorithm \citep{mg10}, and introduce the identification
penalty
\[
\chi_3^2 = (\mathcal{P}_1-\mathcal{P}_{1,p})\cdot\left[C_{\mathcal{P}_{1,p}}-C_{\mathcal{P}_{1,p}}\Gamma_0C_{\mathcal{P}_{1,p}}\right]
(\mathcal{P}_1-\mathcal{P}_{1,p}),
\]
where
\[
C_{\mathcal{P}_{1,p}} = \Gamma_{\mathcal{P}_{1,p}}^{-1},\qquad
\Gamma_0 = C_0^{-1},
\]
with
\[
C_0 = C_{\mathcal{P}_{1,p}}+C_{\mathcal{P}_1},\qquad
C_{\mathcal{P}_1} = \Gamma_{\mathcal{P}_1}^{-1}.
\]
We keep only the solutions with the lowest values of $\chi_3$ or, if
we wish to admit multiple solutions, we keep the ones whose value of
$\chi_3$ is below a certain threshold.

%------------------------
%------------------------
\section{Numerical tests}
\label{s:numerics}

In this section we show an application of our algorithm to asteroids
undergoing a close approch with the Earth.  We consider all the
near-Earth asteroids present in the NEODyS database\footnote{\tt
  http://newton.spacedys.com/neodys} to the date of December 6, 2023,
and select the
%\footnote{\textcolor{red}{(ERICA) I think that we can drop 'the'}}
$1305$ asteroids with Earth MOID\footnote{Minimum Orbit Intersection
  Distance, see \citet{BM1994}, \citet{GBG2023}} $d_{\rm min}$ smaller
than $10^{-3}$ au. For each selected asteroid we compute the points
$P_A$ and $P_\oplus$, lying on the osculating orbits of the asteroid
and the Earth at epoch $t_0$, where the minimum distance $d_{\rm min}$
is attained. Then, we find the closest epoch $t_1>t_0$ when the Earth
arrives at $P_\oplus$ and change the asteroid phase so that it would
arrive at $P_A$ at the same epoch $t_1$ in the framework of the
Sun-asteroid two-body dynamics. In this way, we try to enhance the
effect of the close approach with the Earth at $t_1$. A full $n$-body
propagation, from $t_0$ to $t_1$, is used to compute the topocentric
position vector ${\cal P}_1$
% actually at \tilde{t}_1
from the point of view of the Pan-STARRS 1 telescope
\citep{panstarrs1}.
Continuing the propagation from $t_1$ to a successive epoch $t_2$, we
derive an attributable ${\cal A}_2$ by coordinate change.  As a
result, the components $\alpha_2$, $\delta_2$, $\alphadot_2$,
$\deltadot_2$ of ${\cal A}_2$ are not affected by interpolation
errors, which are usually introduced when an attributable is obtained
from the astrometric observations.

Assume we can model a close approach at epoch $t_1$ by an
instantaneous velocity change, like in \"Opik's theory
\citep{opik1976}. Then, our algorithm can be applied using ${\cal
  P}_1, {\cal A}_2$ as input data, because the values of the Keplerian
integrals are conserved in $(t_1,t_2]$. At $t_1$, when the velocity
instantaneously changes, the position vector ${\cal P}_1$ remains
unchanged.

In practice, the close encounter is not instantaneous, but takes a
finite time: we select the epoch $t_2$ when the close approach is
over, as described below.
% Tisserand's parameter and estimate of the close approach time
In a typical close encounter, the value of Tisserand's parameter $T$
changes quickly, going back approximately to its pre-encounter value
when the encounter is over. In \citet[][Sec. 3]{SOI2023}, for a given
value $C$ of the Jacobi constant, the authors provide a value $d_C$ of
the geocentric distance attaining $|\frac{dT}{dt}|\leq \varepsilon$,
for a fixed small parameter $\varepsilon$, so that we can think that
the encounter is over when the asteroid reaches that distance from the
Earth. Following this approach, we choose $t_2$ as the time when the
value of the distance becomes $d_C$.

\begin{figure}[t!]
  \begin{center}
    \includegraphics[width=7.3cm]{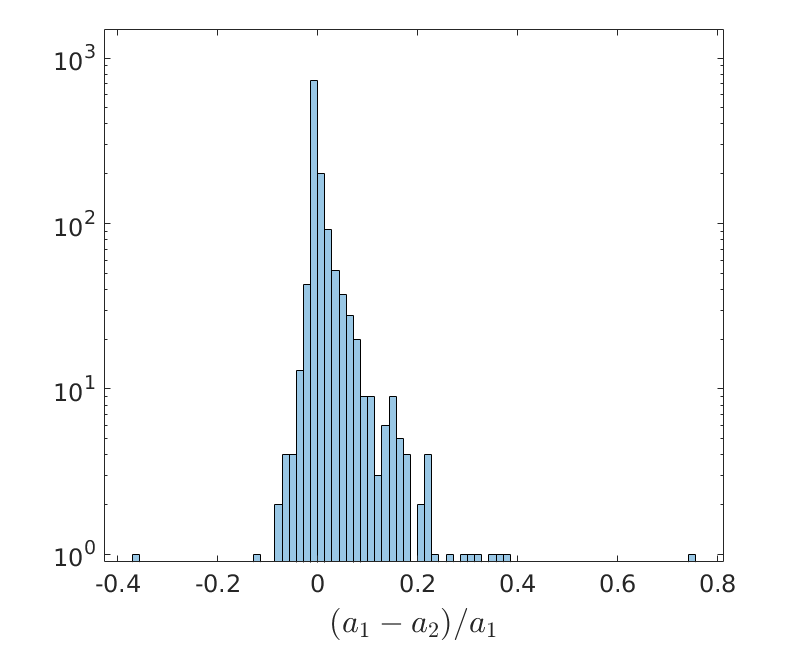}
    %\hskip 0.1cm
\includegraphics[width=7.3cm]{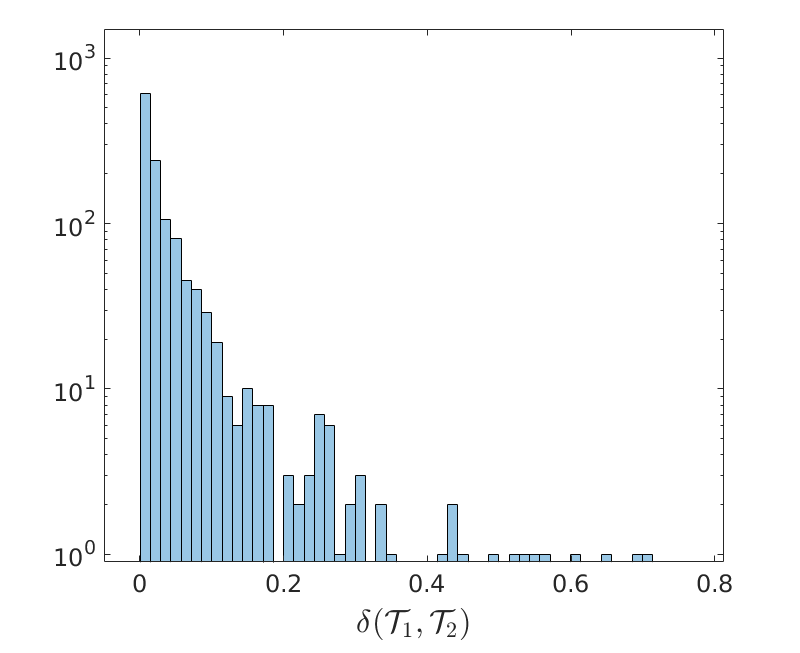}
\end{center}
    \caption{Effects of the close approach on the semimajor axes
      (left) and on the whole trajectories (right).}
\label{fig:ca_effect}
\end{figure}

In Fig.~\ref{fig:ca_effect} we show the effect of the close approach
on these orbits.  On the left, we plot the distribution of the
relative differences between the semimajor axes $a_1$, $a_2$ obtained
by the known orbits propagated at epochs $t_1$ and $t_2$ using the
software OrbFit\footnote{\tt http::adams.dm.unipi.it/orbfit}. On the
right, we plot the distribution of the distance $\delta({\cal
  T}_1,{\cal T}_2)$ between the propagated trajectories ${\cal T}_1$,
${\cal T}_2$, where
\begin{equation}
  \delta({\cal T}_1, {\cal T}_2) = \sqrt{\frac{(a_1-a_2)^2}{a_1^2} +
    (e_1-e_2)^2 + (i_1-i_2)^2 + (\Omega_1-\Omega_2)^2 + (\omega_1-\omega_2)^2 }.
  \label{deltaTra}
\end{equation}
In~\eqref{deltaTra} the subscripts $1$, $2$ of the Keplerian elements
$a$, $e$, $i$, $\Omega$, $\omega$ and the trajectories ${\cal T}$
refer to the epochs $t_1$, $t_2$.
%\footnote{\textcolor{red}{(ERICA) I would add 'respectively'.}}
In most cases the relative change in semimajor axis is within
20\%. However, there are a few cases where the change is larger, with
one extreme case (Fig~\ref{fig:ca_effect}, left)
%\footnote{\textcolor{red}{(ERICA) I would write 'top left panel'.}}
where it passes from $a_1\sim 2.84$ to $a_2\sim 0.69$ au. The
approximate values of the mean and the standard deviation of
$(a_1-a_2)/a_1$ are about 0.012 and 0.0478, respectively.
    
% differences between the computed orbits and the true ones
Next, we show the performace of our algorithm using ${\cal P}_1$,
${\cal A}_2$ as input data, which have been computed for each of these
orbits by a full $n$-body propagation with OrbFit.
% selection of the solutions
In case of multiple solutions, we select the best one according to the
procedure explained in Section~\ref{s:without}.

In 8 cases, out of 1305, we could not obtain an orbital
solution. However, by increasing the time span $[t_1,t_2]$ by $10\%$,
while keeping the same value of $t_1$, thus changing only ${\cal
  A}_2$, we recovered an orbital solution in all these cases.
%that in a few cases was very good.
%
In one case the computation of the second epoch $t_2$ with Newton's
method failed. However, employing the {\em starting guess} value for
$t_2$, computed with a geocentric two-body dynamics, we obtained an
orbital solution also in this case.

In Fig.~\ref{fig:diff_comp_known} we show the distribution of the
differences between the Keplerian elements computed with our algorithm
(with subscript $c$) and the same elements at $t_2$ obtained by
propagation. In Fig.~\ref{fig:diff_comp_known} (bottom right) we also
show the distribution of the distances $\delta( {\cal T}_2,{\cal
  T}_c)$ between propagated (${\cal T}_2$) and computed (${\cal T}_c$)
trajectories, where the function $\delta$ is the same as
in~\eqref{deltaTra}.

\begin{figure}[t!]
  \begin{center}
    \includegraphics[width=7.3cm]{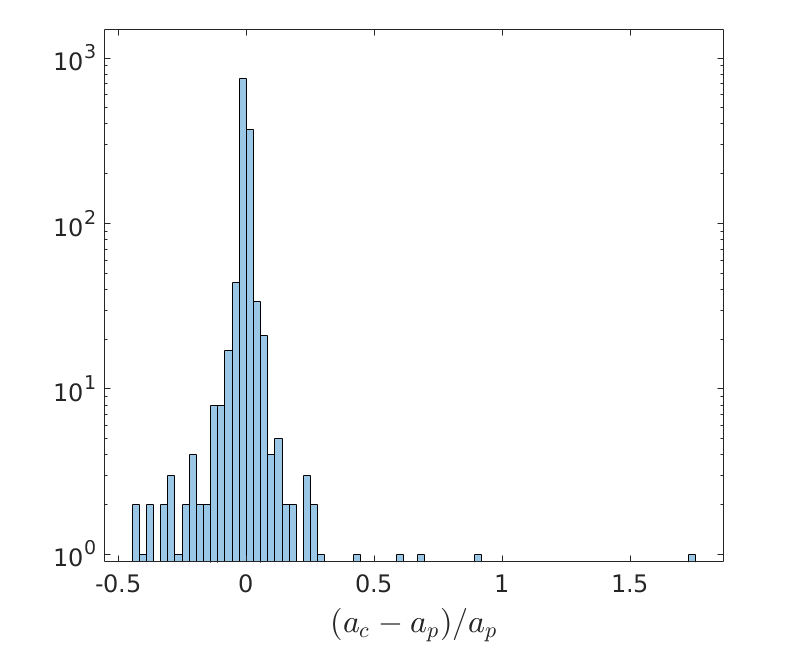}
    \includegraphics[width=7.3cm]{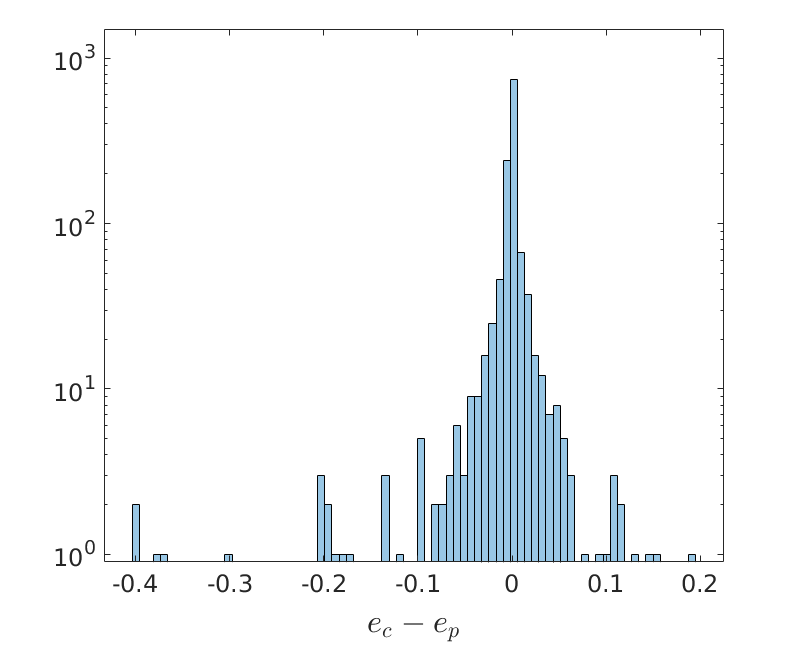}\\
    \includegraphics[width=7.3cm]{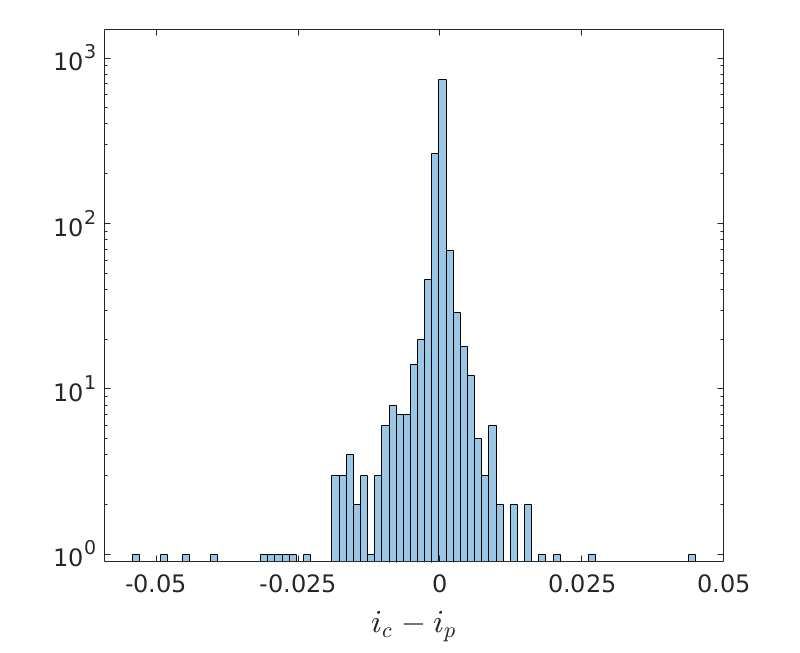}
    \includegraphics[width=7.3cm]{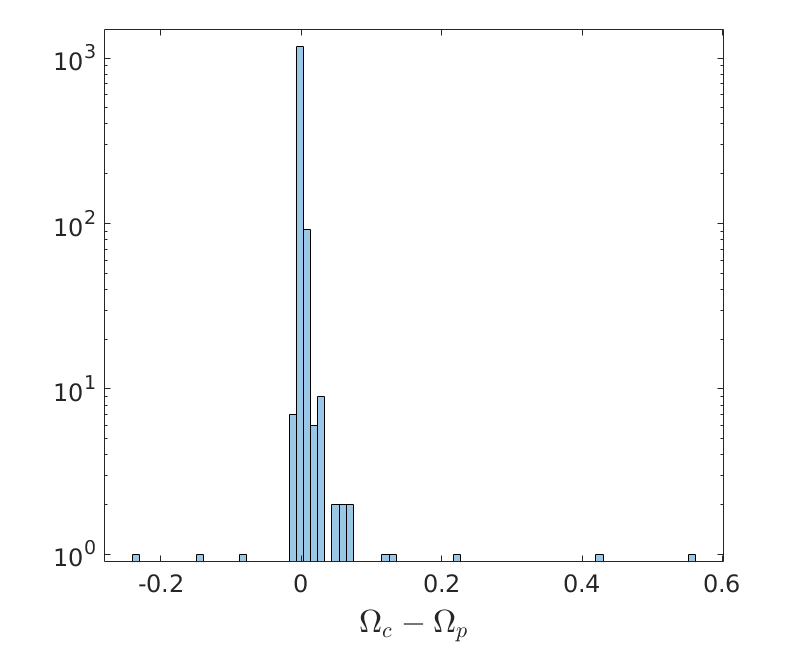}\\
    \includegraphics[width=7.3cm]{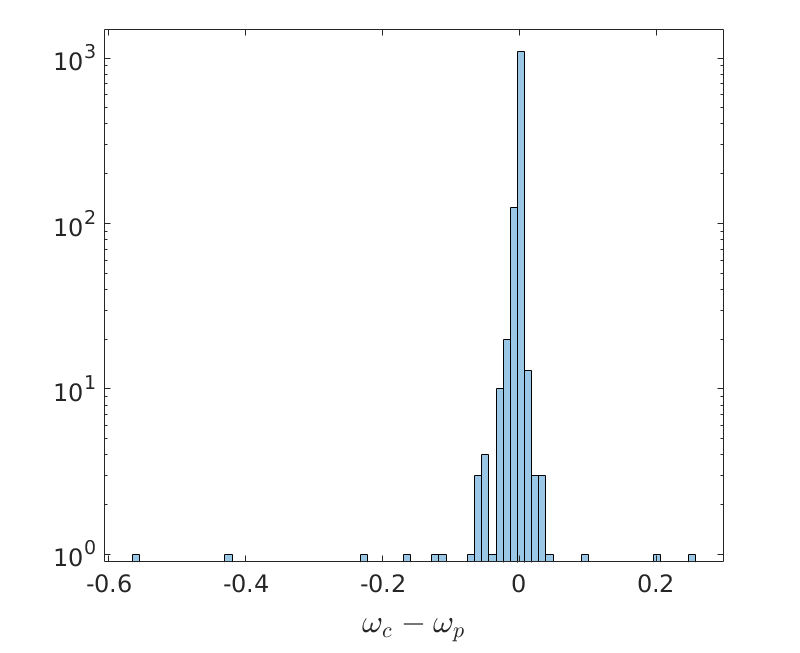}
    \includegraphics[width=7.3cm]{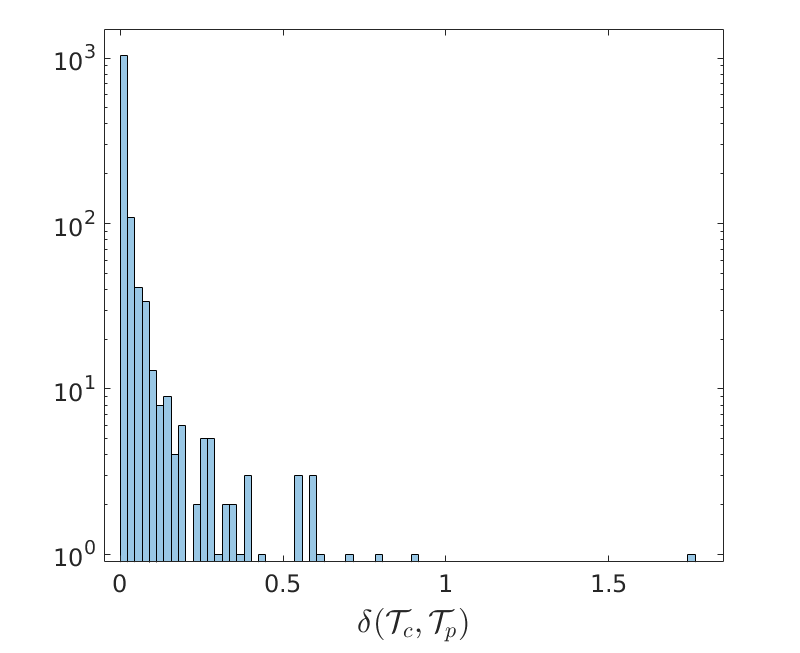}
  \end{center}
  \caption{Distribution of the differences in orbital elements $a$
    (top left), $e$ (top right), $i$ (center left), $\Omega$ (center
    right), $\omega$ (bottom left) between the propagated trajectory
    and the one computed with our algorithm at epoch
    $t_2$. Distribution of the distances $\delta({\cal T}_2,{\cal
      T}_c)$ between propagated and computed trajectories (bottom
    right).}
\label{fig:diff_comp_known}
\end{figure}

The results of this preliminary statistical test are satisfactory:
%apart from a few outliers,
the mean and the standard deviation of the distributions displayed in
Fig.~\ref{fig:diff_comp_known} are shown in Table~\ref{tab:meanstd}.

\renewcommand{\arraystretch}{1.3}
\begin{table}[h!]
  \begin{center}
    \begin{tabular}{c|c|c} 
      & mean & std \cr
      \hline
      $(a_c-a_2)/a_2$ &$-$5.0803$\times$10$^{-4}$ &0.0778 \cr
      $e_c-e_2$ &$-$0.003 &0.0341 \cr
      $i_c-i_2$ &$-$3.8397$\times$10$^{-4}$  &0.0045 \cr
      $\Omega_c-\Omega_2$ &0.0019 &0.0232 \cr
      $\omega_c-\omega_2$  &$-$0.0021 &0.0244 \cr
      \hline
      $\delta({\cal T}_c,{\cal T}_2)$  &0.0264 &0.0877 \cr
      \hline
    \end{tabular}
  \end{center}
  \caption{Mean and standard deviation of the distributions
    represented in Fig.~\ref{fig:diff_comp_known}.}
  \label{tab:meanstd}
\end{table}

\begin{figure}[h!]
  \centerline{\includegraphics[width=9.3cm]{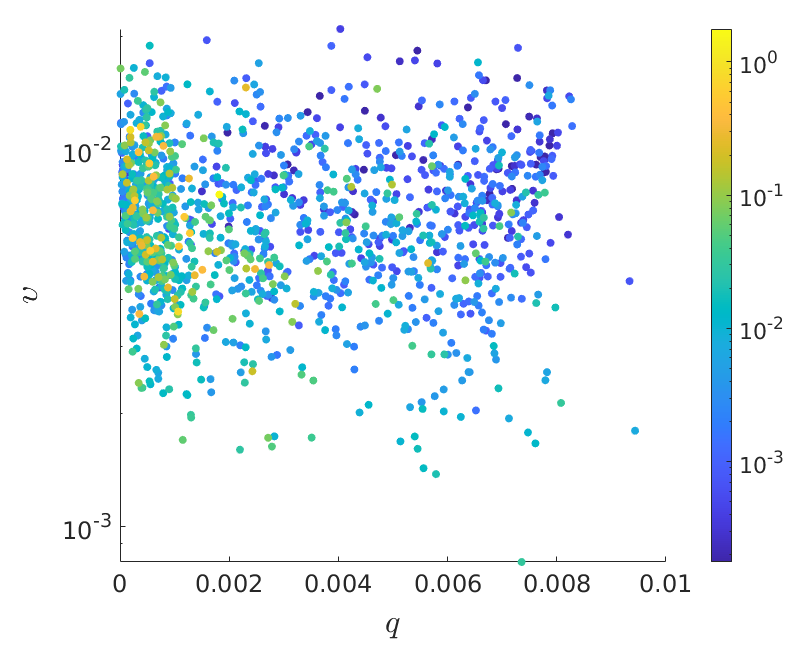}}
  \caption{Values of $q$ and $\upsilon$ in a log-log plot. The colors
    represent the values of the distance $\delta({\cal T}_2,{\cal
      T}_c)$ between known and computed trajectories.}
  \label{fig:q_vq_deltaTra}
\end{figure}

Finally, we show the dependence of the results of our algorithm on the
type of close encounter. Following \citet[][Sect. 3]{SOI2023}, close
encounters can be classified as {\em deep/shallow} and {\em
  fast/slow}, according to the values of the minimum distance $q$ from
the Earth center and the corresponding geocentric velocity $\upsilon$.
Here, we compute an approximation of $q$ and $\upsilon$, still denoted
by these symbols, using the following procedure: $q$ is the pericenter
distance of the geocentric two-body orbit computed from the Keplerian
elements at $t_1$, and $\upsilon$ is the corresponding two-body
velocity. In Fig.~\ref{fig:q_vq_deltaTra} we plot the values of $q$
and $\upsilon$ using a color scale representing the values of the
trajectory distance $\delta({\cal T}_c,{\cal T}_2)$.

For values of $q$ greater than $0.001$ au, we obtain better results
(e.g. lower values of the trajectory distance) for higher values of
the velocity $\upsilon$. For $q$ smaller than $0.001$ au, the results
appear worse, and there is no apparent correlation with velocity in
the figure.

%--------------------
%--------------------
\section{Conclusions}

We have introduced a method to compute preliminary orbits with one
topocentric position vector ${\cal P}_1$ and a very short arc of
optical observations, from which we can derive an attributable ${\cal
  A}_2$. This method is based on polynomial equations coming from the
first integrals of Kepler's problem, i.e. angular momentum, energy,
and Laplace-Lenz vector. Using the conservation laws of these
integrals, after introducing the auxiliary variable $z_2$, we obtain a
polynomial system that always has solutions, at least in the complex
field, even if ${\cal P}_1$, ${\cal A}_2$ do not correspond to the
same celestial object. There are some checks that can be performed to
accept or reject solutions.
%
% of course the solution must be real, and it must have positive
%values of $\rho_2$ and $z_2$. In the equations we do not impose
%anything about the time law, therefore if we have a solution
%compatible with the previous checks, we can try to assess its
%reliability by propagating the orbit obtained at the second epoch
%backward to the first epoch and comparing the topocentric position
%that we get with ${\cal P}_1$.

We applied this algorithm to the computation of $1305$ NEA orbits,
whose phase was changed in order to enhance the close encounter effect
with the Earth: these preliminary results are satisfactory, see
Section~\ref{s:numerics}.
The ideal situation for the application of the proposed algorithm in
this context would be given by an instantaneous effect of the close
encounter, which is not the case.
The selected epoch $t_1$ does not exactly correspond to the time of
passage at the MOID because we use a full $n$-body propagation
starting from $t_0$, so that at $t_1$ both the osculating trajectory
and the time law along it may have changed.  Moreover, from this
preliminary test, we saw that also varying $t_2$ may affect the
results.
Therefore, a more detailed study is necessary to better understand the
applicability in case of close encounters and the reliability of the
computed orbits. On top of that, the sensitivity of the algorithm to
astrometric errors has still to be investigated.

Applications of the method introduced in this work to the orbit
computation of Earth satellites undergoing a maneuvre are also
possible, and are worth to be investigated in a future work.

%--------------------------
%--------------------------
\paragraph{Acknowledgments}
The authors acknowledge the project MIUR-PRIN 20178CJA2AB ``New
Frontiers of Celestial Mechanics: theory and
applications''. G.F. Gronchi and G. Ba\`u acknowledge the project
MSCA-ITN Stardust-R, Grant Agreement n. 813644 under the H2020
research and innovation program. E. Scantamburlo acknowledges the
project ``Advanced Space System Engineering to Address Broad Societal
Benefits – Starting Grant" funded by a contract between Politecnico di
Torino and Compagnia di San Paolo (CSP) 2019/2021 within the call
``Attrazione e retention di docenti di qualità''.

\appendix

%-----------------
%-----------------
\section{Appendix}
\label{app:q5q6}

\begin{lemma}
  For $\angmom_1 = \angmom_2$, the generators $\mathfrak{q}_4$,
  $\mathfrak{q}_5$, $\mathfrak{q}_6$ defined in~\eqref{qgenerators}
  assume the following form:
  \begin{equation*}
    \begin{split}
      \tilde{\mathfrak{q}}_4 &= (\angmom_1\cdot\erho_1)\left[\erre_1\cdot(\erredot_1-\erredot_2)\right]+(\erre_2\cdot\DD_1)z_2,\cr
      \tilde{\mathfrak{q}}_5 &= (\angmom_2\cdot\erho_2)\left[\erre_2\cdot(\erredot_1-\erredot_2)\right]-(\erre_1\cdot\DD_2)\frac{\mu}{|\erre_1|},\cr
      \tilde{\mathfrak{q}}_6 &= (\angmom_2\cdot\erho_2)\left[\erre_1\cdot(\erredot_1-\erredot_2)\right]-(\erre_1\cdot\DD_2)z_2.
    \end{split}
  \end{equation*}
\end{lemma}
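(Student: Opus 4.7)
The plan is to put $\mu\lenz_1$ and $\mu\widetilde{\lenz}_2$ into the ``cross product'' form
\[
\mu\lenz_1 = \erredot_1\times\angmom_1 - \frac{\mu}{|\erre_1|}\erre_1,\qquad \mu\widetilde{\lenz}_2 = \erredot_2\times\angmom_2 - z_2\,\erre_2,
\]
which follows immediately from the vector identity $\erredot\times(\erre\times\erredot)=|\erredot|^2\erre-(\erredot\cdot\erre)\erredot$ and the definitions in~\eqref{integrals}. I would also rewrite $\DD_j = \bq_j\times\erho_j = \erre_j\times\erho_j$, using $\bq_j = \erre_j-\rho_j\erho_j$ and $\erho_j\times\erho_j = \bzero$. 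After these two reformulations, each inner product entering $\mathfrak{q}_4,\mathfrak{q}_5,\mathfrak{q}_6$ is a sum of terms of the form $(\bm a\times\bm b)\cdot(\bm c\times\bm d)$, ready for expansion by the Lagrange identity $(\bm a\times\bm b)\cdot(\bm c\times\bm d) = (\bm a\cdot\bm c)(\bm b\cdot\bm d) - (\bm a\cdot\bm d)(\bm b\cdot\bm c)$.

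The hypothesis $\angmom_1 = \angmom_2 =: \angmom$ (equivalent to $\mathfrak{q}_1=\mathfrak{q}_2=\mathfrak{q}_3=0$, as already observed before the lemma) then drives the simplification: it forces $\angmom\cdot\erre_1 = \angmom\cdot\erre_2 = 0$, because both orbital-plane vectors share the common normal $\angmom$. Combined with the trivial identity $\erre_j\cdot\DD_j = 0$, this kills most of the terms produced by the Lagrange expansion. For $\tilde{\mathfrak{q}}_4 = \mu(\lenz_1-\widetilde{\lenz}_2)\cdot\DD_1$ what survives is precisely
\[
(\erredot_1\cdot\erre_1)(\angmom\cdot\erho_1) - (\erredot_2\cdot\erre_1)(\angmom\cdot\erho_1) + z_2(\erre_2\cdot\DD_1),
\]
which is the asserted formula. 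The computation for $\tilde{\mathfrak{q}}_5 = \mu(\lenz_1-\widetilde{\lenz}_2)\cdot\DD_2$ is the mirror image under $1\leftrightarrow 2$, with the caveat that now the $\mu/|\erre_1|$ term in $\mu\lenz_1$ survives (contributing $-\mu(\erre_1\cdot\DD_2)/|\erre_1|$), whereas the $z_2(\erre_2\cdot\DD_2)$ term vanishes.

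The identity for $\tilde{\mathfrak{q}}_6 = \mu(\lenz_1-\widetilde{\lenz}_2)\cdot(\erre_1\times\erho_2)$ is the place requiring the most care: its $z_2$ contribution is $-z_2\bigl[\erre_2\cdot(\erre_1\times\erho_2)\bigr]$, and one has to recognise the scalar triple product identity $\erre_2\cdot(\erre_1\times\erho_2) = -\erre_1\cdot(\erre_2\times\erho_2) = -\erre_1\cdot\DD_2$ to obtain the sign in front of $z_2(\erre_1\cdot\DD_2)$ displayed in the statement. Beyond this bookkeeping I do not anticipate any substantive obstacle: the entire proof is a direct application of the Lagrange identity together with the orthogonality $\angmom\perp\erre_j$, and the main risk is simply tracking signs in the scalar triple products.
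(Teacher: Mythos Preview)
Your proposal is correct and follows essentially the same approach as the paper: a direct computation using the cross-product form of the Laplace--Lenz vectors, the rewriting $\DD_j=\erre_j\times\erho_j$, and the hypothesis $\angmom_1=\angmom_2$ to kill the $\angmom\cdot\erre_j$ terms. The only cosmetic difference is that the paper expands via the BAC--CAB rule on $\erredot\times\angmom$ before dotting, whereas you apply the Lagrange identity $(\bm a\times\bm b)\cdot(\bm c\times\bm d)$ directly; your organisation is slightly more uniform but the content is identical.
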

\begin{proof}
  The result is given by a direct computation:
  \begin{equation*}
    \begin{split}
      \tilde{\mathfrak{q}}_4 &= -(\erredot_1\cdot\erre_1)\left[\erredot_1\cdot(\erre_1\times\erho_1)\right]-\mu\tilde{\lenz}_2\cdot\DD_1\cr
      &= (\angmom_1\cdot\erho_1)(\erredot_1\cdot\erre_1)-\mu\tilde{\lenz}_2\cdot\DD_1\cr
      &= (\angmom_1\cdot\erho_1)(\erredot_1\cdot\erre_1)-(\|\erre_2\|^2-z_2)\left[\erre_2\cdot(\erre_1\times\erho_1)\right]
      -(\erredot_2\cdot\erre_2)\left[\erredot_2\cdot(\erre_1\times\erho_1)\right]\cr
      &= (\angmom_1\cdot\erho_1)(\erredot_1\cdot\erre_1)-(\erredot_2\times\angmom_2)\cdot(\erre_1\times\erho_1)+(\erre_2\cdot\DD_1)z_2\cr
      &= (\angmom_1\cdot\erho_1)(\erredot_1\cdot\erre_1)-\left[\erredot_2\times(\erre_1\times\erredot_1)\right]
      \cdot(\erre_1\times\erho_1)+(\erre_2\cdot\DD_1)z_2\cr
      &= (\angmom_1\cdot\erho_1)(\erredot_1\cdot\erre_1)-\left[\erre_1(\erredot_1\cdot\erredot_2)-\erredot_1(\erredot_2\cdot\erre_1)\right]
      \cdot(\erre_1\times\erho_1)+(\erre_2\cdot\DD_1)z_2\cr
      &= (\angmom_1\cdot\erho_1)(\erredot_1\cdot\erre_1)+(\erredot_2\cdot\erre_1)\left[\erredot_1\cdot(\erre_1\times\erho_1)\right]
      +(\erre_2\cdot\DD_1)z_2\cr
      &= (\angmom_1\cdot\erho_1)\left[\erre_1\cdot(\erredot_1-\erredot_2)\right]+(\erre_2\cdot\DD_1)z_2,
    \end{split}
  \end{equation*} 
  \begin{equation*}
    \begin{split}
      \tilde{\mathfrak{q}}_5 &= \left(\erredot_1\times\angmom_1-\frac{\mu}{|\erre_1|}\right)\cdot(\erre_2\times\erho_2)
      -(\angmom_2\cdot\erho_2)(\erredot_2\cdot\erre_2)\cr
      &= \left(\erredot_1\times\angmom_2-\frac{\mu}{|\erre_1|}\right)\cdot(\erre_2\times\erho_2)-(\angmom_2\cdot\erho_2)(\erredot_2\cdot\erre_2)\cr
      &= \left((\erredot_1\cdot\erredot_2)\erre_2-(\erre_2\cdot \erredot_1)\erredot_2-\frac{\mu}{|\erre_1|}\right)\cdot (\erre_2\times\erho_2)
      -(\angmom_2\cdot\erho_2)(\erredot_2\cdot\erre_2)\cr
      &= -(\erre_2\cdot\erredot_1)\left[\erredot_2\cdot(\erre_2\times\erho_2)\right]-\frac{\mu}{|\erre_1|}\erre_1\cdot(\erre_2\times\erho_2)
      -(\angmom_2\cdot\erho_2)(\erredot_2\cdot\erre_2)\cr
      & = (\angmom_2\cdot\erho_2)\left[\erre_2\cdot(\erredot_1-\erredot_2)\right]-(\erre_1\cdot\DD_2)\frac{\mu}{|\erre_1|},
    \end{split}
  \end{equation*}
  \begin{equation*}
    \begin{split}
      \tilde{\mathfrak{q}}_6 &= -\left[(\erredot_2\times\angmom_2)-z_2\erre_2\right]\cdot(\erre_1\times\erho_2)
      +(\angmom_2\cdot\erho_2)(\erredot_1\cdot\erre_1)\cr
      &= -\left[(\erredot_2\times\angmom_1)-z_2\erre_2\right]\cdot(\erre_1\times\erho_2)+(\angmom_2\cdot\erho_2)(\erredot_1\cdot\erre_1)\cr
      &= -\left[(\erredot_1\cdot\erredot_2)\erre_1-(\erredot_2\cdot\erre_1)\erredot_1-z_2\erre_2\right]\cdot(\erre_1\times\erho_2)
      +(\angmom_2\cdot\erho_2)(\erredot_1\cdot\erre_1)\cr
      &= (\erredot_2\cdot\erre_1)\left[\erredot_1\cdot(\erre_1\times\erho_2)\right]+z_2\left[\erre_2\cdot(\erre_1\times\erho_2)\right]
      + (\angmom_2\cdot\erho_2)(\erredot_1\cdot\erre_1)\cr
      &= (\angmom_2\cdot\erho_2)\left[\erre_1\cdot(\erredot_1-\erredot_2)\right]-(\erre_1\cdot\DD_2)z_2.
    \end{split}
  \end{equation*}
\end{proof}

\bibliography{mybib}{}
\bibliographystyle{plainnat}

\end{document}